\theoremstyle{plain}
\newtheorem{proposition}{Proposition}
\newtheorem{claim}{Claim}
\newtheorem{lemma}{Lemma}
\newtheorem{theorem}{Theorem}
\theoremstyle{definition}
\newtheorem{observation}{Observation}
\newtheorem{definition}{Definition}
\newcommand{\dotcup}{\mathbin{\mathaccent\cdot\cup}}
\renewcommand{\P}{\mathcal{P}}
\newcommand{\F}{\mathcal{F}}
\newcommand{\X}{\mathcal{X}}
\renewcommand{\L}{\ensuremath{\mathcal{L}}\xspace}
\newcommand{\N}{\mathbb{N}}
\newcommand{\eqvr}[0]{\ensuremath{\mathcal{R}}\xspace}
\newcommand{\Oh}{\mathcal{O}}
\DeclareMathOperator{\tw}{tw}
\newcommand{\x}{\mathbf{x}}
\newcommand{\Z}{\mathbb{Z}}
\newcommand{\ZZ}{\mathcal{Z}}
\newcommand{\cref}[1]{(\ref{#1})\xspace}
\newcommand{\probname}[1]{\textsc{\lowercase{#1}}}
\newcommand{\problem}[1]{\probname{#1}\xspace}
\newcommand{\ILPF}{\probname{ILPF}\xspace}
\newcommand{\IS}{\probname{Independent Set}\xspace}
\newcommand{\HSN}{\probname{Hitting Set($n$)}\xspace}
\newcommand{\HS}{\probname{Hitting Set}\xspace}
\newcommand{\SAT}{\probname{Satisfiability}\xspace}
\newcommand{\NP}{\ensuremath{\mathsf{NP}}\xspace}
\newcommand{\FPT}{\ensuremath{\mathsf{FPT}}\xspace}
\newcommand{\Ptime}{\ensuremath{\mathsf{P}}\xspace}
\newcommand{\coNP}{\ensuremath{\mathsf{coNP}}\xspace}
\newcommand{\containment}{\ensuremath{\mathsf{NP \subseteq coNP/poly}}\xspace}
\newcommand{\ncontainment}{\ensuremath{\mathsf{NP \nsubseteq coNP/poly}}\xspace}
\newcommand{\introduceproblem}[3]{%
\begin{quote}
#1\\
\textbf{Input:} #2\\
\textbf{Question:} #3
\end{quote}}
\newcommand{\yes}{\emph{yes}\xspace}
\newcommand{\no}{\emph{no}\xspace}
\title{A structural approach to kernels for ILPs:\\ Treewidth and Total Unimodularity}
\author{Bart M.\ P.\ Jansen\thanks{Supported by ERC Starting Grant 306992 while working for the University of Bergen.}\\Eindhoven University of Technology\\The Netherlands\\\texttt{b.m.p.jansen@tue.nl}
\and Stefan Kratsch\thanks{Supported by the German Research Foundation (DFG), KR 4286/1.}\\University of Bonn\\Germany\\\texttt{kratsch@cs.uni-bonn.de}}
\begin{document}

\maketitle

\begin{abstract}
Kernelization is a theoretical formalization of efficient preprocessing for \NP-hard problems. Empirically, preprocessing is highly successful in practice, for example in state-of-the-art ILP-solvers like CPLEX. Motivated by this, previous work studied the existence of kernelizations for ILP related problems, e.g., for testing feasibility of $Ax\leq b$. In contrast to the observed success of CPLEX, however, the results were largely negative. Intuitively, practical instances have far more useful structure than the worst-case instances used to prove these lower bounds.

In the present paper, we study the effect that subsystems that have (a Gaifman graph of) bounded treewidth or that are totally unimodular have on the kernelizability of the ILP feasibility problem. We show that, on the positive side, if these subsystems have a small number of variables on which they interact with the remaining instance, then we can efficiently replace them by smaller subsystems of size polynomial in the domain without changing feasibility. Thus, if large parts of an instance consist of such subsystems, then this yields a substantial size reduction. Complementing this we prove that relaxations to the considered structures, e.g., larger boundaries of the subsystems, allow worst-case lower bounds against kernelization. Thus, these relaxed structures give rise to instance families that cannot be efficiently reduced, by any approach.
\end{abstract}

\section{Introduction}\label{section:intro}

The notion of kernelization from parameterized complexity is a theoretical formalization of preprocessing (i.e., data reduction) for \NP-hard combinatorial problems. Within this framework it is possible to prove worst-case upper and lower bounds for preprocessing; see, e.g., recent surveys on kernelization~\cite{Kratsch14,LokshtanovMS12}.
Arguably one of the most successful examples of preprocessing in practice are the simplification routines within modern integer linear program (ILP) solvers like CPLEX (see also~\cite{AtamturkS05,GenovaG11_survey,Savelsbergh94}). Since ILPs have high expressive power, already the problem of testing feasibility of an ILP is \NP-hard; there are immediate reductions from a variety of well-known \NP-hard problems. Thus, the problem also inherits many lower bounds, in particular, lower bounds against kernelization.

\introduceproblem{\problem{Integer Linear Program Feasibility} -- \ILPF}{A matrix $A\in\Z^{m\times n}$ and a vector~$b\in\Z^m$.}{Is there an integer vector~$x\in \Z^n$ with $Ax\leq b$?}

Despite this negative outlook, a formal theory of preprocessing, such as kernelization aims to be, needs to provide a more detailed view on one of the most successful practical examples of preprocessing, even if worst-case bounds will rarely match empirical results. With this premise we take a structural approach to studying kernelization for \ILPF. We pursue two main structural aspects of ILPs. The first one is the treewidth of the so-called \emph{Gaifman graph} underlying the constraint matrix~$A$.  
As a second aspect we consider ILPs whose constraint matrix has large parts that are totally unimodular.
Both bounded treewidth and total unimodularity of the whole system $Ax\leq b$ imply that feasibility (and optimization) are tractable.\footnote{Small caveat: For bounded treewidth this also requires bounded domain.} We study the effect of having \emph{subsystems} that have bounded treewidth or that are totally unimodular. We determine when such subsystems allow for a substantial reduction in instance size. Our approach differs from previous work~\cite{Kratsch13_ilp2,Kratsch13_ilp1} in that we study structural parameters related to treewidth and total unimodularity rather than considering parameters such as the dimensions $n$ and $m$ of the constrain matrix $A$ or the sparsity thereof.

\paragraph{Treewidth and ILPs.} The Gaifman graph~$G(A)$ of a matrix~$A\in\Z^{m\times n}$ is a graph with one vertex per column of~$A$, i.e., one vertex per variable, such that variables that occur in a common constraint form a clique in~$G(A)$ (see Section~\ref{section:gaifman:decomposition}). 
This perspective allows us to consider the structure of an ILP by graph-theoretical means. In the context of graph problems, a frequently employed preprocessing strategy is to replace a simple (i.e., constant-treewidth) part of the graph that attaches to the remainder through a constant-size boundary, by a smaller gadget that enforces the same restrictions on potential solutions. There are several meta-kernelization theorems (cf.~\cite{KimLPRRSS13}) stating that large classes of graph problems can be effectively preprocessed by repeatedly replacing such \emph{protrusions} by smaller structures. 
It is therefore natural to consider whether large protrusions in the Gaifman graph~$G(A)$, corresponding to subsystems of the ILP, can safely be replaced by smaller subsystems. 

We give an explicit dynamic programming algorithm to determine which assignments to the boundary variables (see Section~\ref{section:treewidth:protrusionreduction}) of the protrusions can be extended to feasible assignments to the remaining variables in the protrusion. Then we show that, given a list of feasible assignments to the boundary of the protrusion, the corresponding subsystem of the ILP can be replaced by new constraints. If there are~$r$ variables in the boundary and their domain is bounded by~$d$, we find a replacement system with~$\Oh(r \cdot d^r)$ variables and constraints that can be described in~$\tilde\Oh(d^{2r})$ bits. 
By an information-theoretic argument we prove that equivalent replacement systems require~$\Omega(d^r)$ bits to encode. 
Moreover, we prove that large-domain structures are indeed an obstruction for effective kernelization by proving that a family of instances with a single variable of large domain (all others have~$\{0,1\}$), and with given Gaifman decompositions into protrusions and a small shared part of encoding size~$N$, admit no kernelization or compression to size polynomial in~$N$. 

On the positive side, we apply the replacement algorithm to protrusion decompositions of the Gaifman graph to shrink \ILPF instances. When an \ILPF instance can be decomposed into a small number of protrusions with small boundary domains, replacing each protrusion by a small equivalent gadget yields an equivalent instance whose overall size is bounded. The recent work of Kim et al.~\cite{KimLPRRSS13} on meta-kernelization has identified a structural graph parameter such that graphs from an appropriately chosen family with parameter value~$k$ can be decomposed into~$\Oh(k)$ protrusions. If the Gaifman graph of an \ILPF instance satisfies these requirements, the \ILPF problem has kernels of size polynomial in~$k$. Concretely, one can show that bounded-domain \ILPF has polynomial kernels when the Gaifman graph excludes a fixed graph~$H$ as a topological minor and the parameter~$k$ is the size of a modulator of the graph to constant treewidth. We do not pursue this application further in the paper, as it follows from our reduction algorithms in a straight-forward manner.

\paragraph{Total unimodularity.} Recall that a matrix is totally unimodular (TU) if every square submatrix has determinant $1$, $-1$, or $0$. If $A$ is TU then feasibility of $Ax\leq b$, for any integral vector $b$, can be tested in polynomial time. (Similarly, one can efficiently optimize any function $c^Tx$ subject to $Ax\leq b$.) We say that a matrix $A$ is \emph{totally unimodular plus $p$ columns} if it can be obtained from a TU matrix by changing entries in at most $p$ columns.
Clearly, changing a single entry may break total unimodularity, but changing only few entries should still give a system of constraints $Ax\leq b$ that is much simpler than the worst-case. Indeed, if, e.g., all variables are binary (domain $\{0,1\}$) then one may check feasibility by simply trying all $2^p$ assignments to variables with modified column in $A$. The system on the remaining variables
will be TU and can be tested efficiently.

From the perspective of kernelization it is interesting whether a small value of $p$ allows a reduction in size for $Ax\leq b$ or, in other words, whether one can efficiently find an equivalent system of size polynomial in $p$. We prove that this depends on the structure of the system on variables with unmodified columns. If this remaining system decomposes into separate subsystems, each of which depends only on a \emph{bounded number of variables in non-TU columns}, then by a similar reduction rule as for the treewidth case we get a reduced instance of size polynomial in $p$ and the domain size $d$. Complementing this we prove that in general, i.e., without this bounded dependence, there is no kernelization to size polynomial in $p+d$; this also holds even if $p$ counts the \emph{number of entry changes} to obtain $A$ from a TU matrix, rather than the (usually smaller) number of modified columns.

\paragraph{Related work.} Several lower bounds for kernelization for \ILPF and other ILP-related problems follow already from lower bounds for other (less general) problems. For example, unless \containment and the polynomial hierarchy collapses\footnote{\ncontainment is a standard assumption in computational complexity. It is stronger than $\Ptime \neq\NP$ and $\NP\nsubseteq\coNP$, and it is known that \containment implies a collapse of the polynomial hierarchy.}, 
there is no efficient algorithm that reduces every instance~$(A,b)$ of \ILPF to an equivalent instance of size polynomial in~$n$ (here~$n$ refers to the number of columns in~$A$); this follows from lower bounds for \HS~\cite{DomLS14} or for \SAT~\cite{DellM14} and, thus, holds already for binary variables ($0/1$-\ILPF). The direct study of kernelization properties of ILPs was initiated in~\cite{Kratsch13_ilp2,Kratsch13_ilp1} and focused on the influence of row- and column-sparsity of~$A$ on having kernelization results in terms of the dimensions~$n$ and~$m$ of~$A$. At high level, the outcome is that unbounded domain variables rule out essentially all nontrivial attempts at polynomial kernelizations. In particular, \ILPF admits no kernelization to size polynomial in~$n+m$ when variable domains are unbounded, unless \containment; this remains true under strict bounds on sparsity~\cite{Kratsch13_ilp2}. For bounded domain variables the situation is a bit more positive: there are generalizations of positive results for $d$-\HS and $d$-\SAT (when sets/clauses have size at most $d$). One can reduce to size polynomial in~$n$ in general~\cite{Kratsch13_ilp1}, and to size polynomial in~$k$ when seeking a feasible~$x\geq 0$ with~$|x|_1\leq k$ for a sparse covering ILP~\cite{Kratsch13_ilp2}.

\paragraph{Organization.}
Section~\ref{section:preliminaries} contains preliminaries about parameterized complexity, graphs, and treewidth. 
In Section~\ref{section:treewidth} we analyze the effect of treewidth on preprocessing ILPs, while we consider the effect of large totally unimodular submatrices in Section~\ref{section:tum}. In Section~\ref{section:tudiscussion} we discuss some differences between totally unimodular and bounded-treewidth subsystems. We conclude in Section~\ref{section:discussion}.

\section{Preliminaries}\label{section:preliminaries}

\paragraph{Parameterized complexity and kernelization.} A \emph{parameterized problem} is a set~$Q\subseteq\Sigma^*\times\N$ where~$\Sigma$ is any finite alphabet and~$\N$ denotes the non-negative integers. In an instance~$(x,k)\in\Sigma^*\times\N$ the second component is called the \emph{parameter}. A parameterized problem~$Q$ is \emph{fixed-parameter tractable} (\FPT) if there is an algorithm that, given any instance~$(x,k)\in\Sigma^*\times\N$, takes time~$f(k)|x|^{\Oh(1)}$ and correctly determines whether~$(x,k)\in Q$; here~$f$ is any computable function. A \emph{kernelization} for~$Q$ is an algorithm~$K$ that, given~$(x,k)\in\Sigma^*\times\N$, takes time polynomial in~$|x|+k$ and returns an instance~$(x',k')\in\Sigma^*\times\N$ such that~$(x,k)\in Q$ if and only if~$(x',k')\in Q$ (i.e., the two instances are equivalent) and~$|x'|+k'\leq h(k)$; here~$h$ is any computable function, and we also call it the \emph{size of the kernel}. If~$h(k)$ is polynomially bounded in~$k$, then~$K$ is a \emph{polynomial kernelization}. We also define \emph{(polynomial) compression}; the only difference with kernelization is that the output is any instance~$x'\in\Sigma'^*$ with respect to any fixed language~$\L$, i.e., we demand that~$(x,k)\in Q$ if and only if~$x'\in\L$ and that~$|x'|\leq h(k)$. A polynomial-parameter transformation from a parameterized problem~$P$ to a parameterized problem~$Q$ is a polynomial-time mapping that transforms each instance~$(x,k)$ of~$P$ into an equivalent instance~$(x',k')$ of~$Q$, with the guarantee that~$(x,k)\in P$ if and only if~$(x',k')\in Q$ and~$k'\leq p(k)$ for some polynomial~$p$.

\paragraph{Lower bounds for kernelization.} For one of our lower bound proofs we use the notion of a \emph{cross-composition} from \cite{BodlaenderJK14}, which builds on the framework for lower bounds for kernelization by Bodlaender et al.~\cite{BodlaenderDFH09} and Fortnow and Santhanam \cite{FortnowS11}.

\begin{definition}\label{definition:polyequivalencerelation}
An equivalence relation~\eqvr on $\Sigma^*$ is called a \emph{polynomial equivalence relation} if the following two conditions hold:
\begin{enumerate}
	\item There is an algorithm that given two strings~$x,y \in \Sigma^*$ decides whether~$x$ and~$y$ belong to the same equivalence class in~$(|x| + |y|)^{\Oh(1)}$ time.
	\item For any finite set~$S \subseteq \Sigma^*$ the equivalence relation~$\eqvr$ partitions the elements of~$S$ into at most~$(\max _{x \in S} |x|)^{\Oh(1)}$ classes.
\end{enumerate}
\end{definition}
\begin{definition}\label{definition:crosscomposition}
Let~$L \subseteq \Sigma^*$ be a set and let~$Q \subseteq \Sigma^* \times \mathbb{N}$ be a parameterized problem. We say that~$L$ \emph{cross-composes} into~$Q$ if there is a polynomial equivalence relation~$\eqvr$ and an algorithm that, given~$t$ strings~$x_1, x_2, \ldots, x_t$ belonging to the same equivalence class of~$\eqvr$, computes an instance~$(x^*,k^*) \in \Sigma^* \times \mathbb{N}$ in time polynomial in~$\sum _{i\in[t]} |x_i|$ such that:
\begin{enumerate}
	\item~$(x^*, k^*) \in Q \Leftrightarrow x_i \in L$ for some~$i \in [t]$,
	\item~$k^*$ is bounded by a polynomial in~$\max _{i \in [t]} |x_i|+\log t$.
\end{enumerate}
\end{definition}
\begin{theorem}[\cite{BodlaenderJK14}] \label{theorem:crosscomp:nokernel}
If the set~$L \subseteq \Sigma^*$ is NP-hard under Karp reductions and~$L$ cross-composes into the parameterized problem~$Q$, then there is no polynomial kernel or compression for~$Q$ unless \containment.
\end{theorem}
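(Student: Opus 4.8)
The plan is to reduce the statement to ruling out polynomial \emph{compression}, since a polynomial kernelization is a special case: given a kernel outputting $(x',k')$ with $|x'|+k'\le h(k)$, encode the pair as a single string of length $h(k)+\Oh(\log h(k))$ and let the target language be (an encoding of)~$Q$ itself. So I would assume toward a contradiction that $L$ is \NP-hard under Karp reductions, that $L$ cross-composes into $Q$ via a polynomial equivalence relation \eqvr and a composition algorithm, and that $Q$ admits a polynomial compression $K$ into some fixed language $\L\subseteq\Sigma'^*$ with polynomial size bound $h$. From these three ingredients I would build an \emph{OR-distillation} of $L$ and then contradict the assumption \ncontainment.

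Concretely, I would exhibit a polynomial-time algorithm that, given $t$ strings $x_1,\dots,x_t\in\Sigma^*$ each of length at most $s$, outputs a single string of length bounded by a polynomial in $s$ \emph{alone} that lies in a fixed language if and only if $x_i\in L$ for some $i$. Step~1 (bounding~$t$): if $t$ exceeds the number of strings over $\Sigma$ of length at most $s$, discard duplicates, after which $\log t=\Oh(s)$. Step~2 (bucketing): using property~1 of Definition~\ref{definition:polyequivalencerelation}, partition the surviving instances into \eqvr-classes in polynomial time; by property~2 there are at most $p_1(s)$ of them. Step~3 (composing each bucket): feed each class to the cross-composition algorithm, obtaining instances $(x_j^*,k_j^*)$ of $Q$ with $k_j^*$ polynomial in $\max_i|x_i|+\log t=\Oh(s)$, hence $k_j^*\le p_2(s)$, and with $(x_j^*,k_j^*)\in Q$ iff some instance in bucket~$j$ lies in $L$. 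Step~4 (compressing and concatenating): apply $K$ to each $(x_j^*,k_j^*)$ to obtain $y_j$ with $|y_j|\le h(k_j^*)\le h(p_2(s))$ and $y_j\in\L$ iff $(x_j^*,k_j^*)\in Q$, and output the tuple $\langle y_1,\dots,y_{p_1(s)}\rangle$. Its length is $\Oh(p_1(s)\cdot h(p_2(s)))=\mathrm{poly}(s)$, independent of $t$, the running time is polynomial in $\sum_i|x_i|$, and chaining the three equivalences shows that the output lies in the ``$\mathrm{OR}$ of \L'' language $\{\langle z_1,\dots,z_r\rangle : z_i\in\L\text{ for some }i\}$ exactly when some $x_i\in L$.

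Finally I would invoke the distillation lower bound of Fortnow and Santhanam~\cite{FortnowS11}, as sharpened by Dell and van Melkebeek~\cite{DellM14}: an \NP-hard language admitting such an OR-distillation into any language forces \containment, which collapses the polynomial hierarchy. This contradicts the standing assumption \ncontainment, so $Q$ has no polynomial compression and hence no polynomial kernelization. The only genuinely delicate point in the argument is the size bookkeeping of Steps~1--4 — in particular, using the deduplication of Step~1 to absorb the $\log t$ slack that the definition of cross-composition permits, so that the final output length is $\mathrm{poly}(s)$ rather than $\mathrm{poly}(s,t)$; the remainder is routine plumbing. The deep ingredient, the distillation lower bound itself, I would use as a black box, since it is orthogonal to the reduction carried out here.
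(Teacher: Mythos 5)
The paper does not prove this theorem itself --- it imports it verbatim from~\cite{BodlaenderJK14} as a black box --- and your reconstruction is, up to presentation, exactly the argument given in that reference: deduplicate to absorb the $\log t$ slack, bucket by the polynomial equivalence relation, cross-compose each bucket, compress each result, and concatenate into an instance of the ``OR of $\L$'' language, then invoke the Fortnow--Santhanam/Dell--van Melkebeek distillation lower bound. Your size bookkeeping is sound (the output length is $\mathrm{poly}(s)$ independent of $t$), so the proof is correct and takes the same route as the cited source.
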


\paragraph{Graphs.} All graphs in this work are simple, undirected, and finite. For a finite set~$X$ and positive integer~$n$, we denote by~$\binom{X}{n}$ the family of size-$n$ subsets of~$X$. The set~$\{1,\ldots,n\}$ is abbreviated as~$[n]$. An undirected graph~$G$ consists of a vertex set~$V(G)$ and edge set~$E(G) \subseteq \binom{V(G)}{2}$. For a set~$X \subseteq V(G)$ we use~$G[X]$ to denote the subgraph of~$G$ induced by~$X$. We use~$G - X$ as a shorthand for~$G[V(G) \setminus X]$. For~$v \in V(G)$ we use~$N_G(v)$ to denote the open neighborhood of~$v$. For~$X \subseteq V(G)$ we define~$N_G(X) := \bigcup _{v \in X} N_G(v) \setminus X$. The \emph{boundary} of~$X$ in~$G$, denoted~$\partial_G(X)$, is the set of vertices in~$X$ that have a neighbor in~$V(G) \setminus X$.

\paragraph{Treewidth and protrusion decompositions.}
A \emph{tree decomposition} of a graph~$G$ is a pair~$(T, \X)$, where~$T$ is a tree and~$(\X = \{X_i \mid i \in V(T)\})$ is a family of subsets of~$V(G)$ called \emph{bags}, such that (i) $\bigcup_{i \in V(T)} X_i = V(G)$, (ii) for each edge~$\{u,v\} \in E(G)$ there is a node~$i \in V(T)$ with~$\{u,v\} \subseteq X_i$, and (iii) for each~$v \in V(G)$ the nodes~$\{i \mid v \in X_i\}$ induce a connected subtree of~$T$. The \emph{width} of the tree decomposition is~$\max _{i\in V(T)} |X_i| - 1$. The \emph{treewidth} of a graph~$G$, denoted~$\tw(G)$, is the minimum width over all tree decompositions of~$G$.
An optimal tree decomposition of an $n$-vertex graph~$G$ can be computed in time~$\Oh(2^{\Oh(\tw(G)^3)} n)$ using Bodlaender's algorithm~\cite{Bodlaender96_lintimetw}. A $5$-approximation to treewidth can be computed in time~$\Oh(2^{\Oh(\tw(G))}n)$ using the recent algorithm of Bodlaender et al.~\cite{BodlaenderDDFLP13}. A vertex set~$X$ such that~$\tw(G - X) \leq t$ is called a \emph{treewidth-$t$ modulator}. 

For a positive integer~$r$, an \emph{$r$-protrusion} in a graph~$G$ is a vertex set~$X \subseteq V(G)$ such that~$\tw(G[X]) \leq r - 1$ and~$\partial_G(X) \leq r$. An~$(\alpha, r)$-protrusion decomposition of a graph~$G$ is a partition~$\P = Y_0 \dotcup Y_1 \dotcup \ldots \dotcup Y_\ell$ of~$V(G)$ such that (1) for every~$1 \leq i \leq \ell$ we have~$N_G(Y_i) \subseteq Y_0$, (2) $\max(\ell, |Y_0|) \leq \alpha$, and (3) for every~$1 \leq i \leq \ell$ the set~$Y_i\cup N_G(Y_i)$
is an $r$-protrusion in~$G$. We sometimes refer to~$Y_0$ as the \emph{shared part}.

\section{ILPs of bounded treewidth}\label{section:treewidth}

We analyze the influence of treewidth for preprocessing \ILPF. In Section~\ref{section:gaifman:decomposition} we give formal definitions to capture the treewidth of an ILP, and introduce a special type of tree decompositions to solve ILPs efficiently. In Section~\ref{section:treewidth:basic} we study the parameterized complexity of \ILPF parameterized by treewidth. Tractability turns out to depend on the domain of the variables. An instance~$(A,b)$ of \ILPF has \emph{domain size~$d$} if, for every variable~$x_i$, there are constraints~$-x_i \leq d'$ and~$x_i \leq d''$ for some~$d' \geq 0$ and~$d'' \leq d-1$. (All positive results work also under more relaxed definitions of domain size~$d$, e.g., any choice of~$d$ integers for each variable, at the cost of technical complication.)
The feasibility of bounded-treewidth, bounded-domain ILPs is used in Section~\ref{section:treewidth:protrusionreduction} to formulate a protrusion replacement rule. It allows the number of variables in an ILP of domain size~$d$ that is decomposed by a~$(k,r)$-pro\-tru\-sion decomposition to be reduced to~$\Oh(k \cdot r \cdot d^{r})$.
In Section~\ref{section:tw:limitations} we discuss limitations of the protrusion-replacement approach.

\subsection{Tree decompositions of linear programs} \label{section:gaifman:decomposition}
Given a constraint matrix~$A\in\Z^{m\times n}$ we define the corresponding \emph{Gaifman graph}~$G=G(A)$ as follows~\cite[Chapter 11]{FlumG06}. We let~$V(G) =\{x_1,\ldots,x_n\}$, i.e., the variables in~$Ax \leq b$ for~$b \in \Z^m$. We let~$\{x_i,x_j\}\in E(G)$ if and only if there is an~$r\in [m]$ with~$A[r,i]\neq 0$ and~$A[r,j]\neq 0$. Intuitively, two vertices are adjacent if the corresponding variables occur together in some constraint.

\begin{observation} \label{observation:rows:clique}
For every row~$r$ of~$A \in \Z^{m \times n}$, the variables~$Y_r$ with nonzero coefficients in row~$r$ form a clique in~$G(A)$. Consequently (cf.~\cite{Bodlaender98}), any tree decomposition~$(T,\X)$ of~$G(A)$ has a node~$i$ with~$Y_r \subseteq X_i$.
\end{observation}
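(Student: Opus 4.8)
The plan is to verify the two assertions in turn; both are immediate from the definitions, so the ``proof'' amounts to spelling out two standard facts about Gaifman graphs and tree decompositions.

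For the first assertion, fix a row $r \in [m]$ and recall $Y_r = \{x_i \mid A[r,i] \neq 0\}$. Take any two distinct variables $x_i, x_j \in Y_r$. Then $A[r,i] \neq 0$ and $A[r,j] \neq 0$, so there is an~$r \in [m]$ with nonzero entries in both columns $i$ and $j$; by the defining rule of the Gaifman graph this means $\{x_i, x_j\} \in E(G(A))$. As this holds for every pair of vertices in $Y_r$, the set $Y_r$ induces a clique in $G(A)$.

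For the second assertion I would invoke the folklore fact that in any tree decomposition every clique of the graph is contained in a single bag, and include a short reminder of why. For each $v \in Y_r$ the set $T_v := \{i \in V(T) \mid v \in X_i\}$ is, by property~(iii) of tree decompositions, a connected subtree of~$T$. For every edge $\{x_i,x_j\}$ of the clique $G[Y_r]$, property~(ii) yields a node whose bag contains both endpoints, so $T_{x_i} \cap T_{x_j} \neq \emptyset$; hence the subtrees $\{T_v \mid v \in Y_r\}$ pairwise intersect. By the Helly property of subtrees of a tree, a family of pairwise intersecting subtrees has a common node, so there is $i \in \bigcap_{v \in Y_r} T_v$, and this node satisfies $Y_r \subseteq X_i$, as claimed.

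I do not anticipate any real obstacle: the only mild subtlety is the Helly property for subtrees of a tree, which I would either cite (it is standard, cf.\ Bodlaender's survey~\cite{Bodlaender98}) or establish in two lines by rooting $T$ arbitrarily and observing that, among the topmost nodes of the subtrees $T_v$, one of maximal depth lies in every $T_v$ (any other $T_v$ meets $T_w$ in some node $z$, and the chosen node lies on the tree-path between $z$ and the topmost node of $T_v$, hence in $T_v$). An alternative, equally routine route is induction on $|Y_r|$ using properties~(i)--(iii), but the Helly argument is cleaner and self-contained.
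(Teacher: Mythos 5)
Your proof is correct and matches the standard argument the paper relies on: the paper states this as an observation without proof, citing the folklore fact that every clique of a graph is contained in some bag of any tree decomposition, which is exactly what you establish via the Helly property of subtrees of a tree. Both your direct definition-check for the first assertion and your Helly-based argument (or the topmost-node variant you sketch) are the intended reasoning.
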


To simplify the description of our dynamic programming procedure, we will restrict the form of the tree decompositions that the algorithm is applied to. This is common practice when dealing with graphs of bounded treewidth: one works with \emph{nice tree decompositions} consisting of \emph{leaf, join, forget}, and \emph{introduce} nodes. When using dynamic programming to solve ILPs it will be convenient to have another type of node, the \emph{constraint node}, to connect the structure of the Gaifman graph to the constraints in the ILP. To this end, we define the notion of a \emph{nice Gaifman decomposition} including constraint nodes.

\begin{definition} \label{definition:gaifman:decomposition}
Let~$A \in \Z^{m \times n}$. A \emph{nice Gaifman decomposition of~$A$ of width~$w$} is a triple $(T, \X = \{X_i \mid i \in V(T)\}, \ZZ = \{Z_i \mid i \in V(T)\})$, where~$T$ is a rooted tree and~$(T,\X)$ is a width~$w$ tree decomposition of the Gaifman graph~$G(A)$ with:\\
(1) The tree~$T$ has at most~$4n + m$ nodes.\\
(2) Every row of~$A$ is assigned to exactly one node of~$T$. If row~$j$ is mapped to node~$i$ then~$Z_i$ is a list of pointers to the nonzero coefficients in row~$j$.\\
(3) Every node~$i$ of~$T$ has one of the following types:%
% \vspace{-0.2cm}
	\begin{description}
		\item[leaf:] $i$ has no children and~$|X_i| = 1$,
		\item[join:] $i$ has exactly two children~$j,j'$ and~$X_i = X_j = X_{j'}$,
		\item[introduce:] $i$ has exactly one child~$j$ and~$X_i = X_j \cup \{v\}$ with~$v \in V(G(A)) \setminus X_j$,
		\item[forget:] $i$ has exactly one child~$j$ and~$X_i = X_j \setminus \{v\}$ with~$v \in V(G(A)) \cap X_j$,
		\item[constraint:] $i$ has exactly one child~$j$,~$X_i = X_j$, and~$Z_i$ stores a constraint of~$A$ involving variables that are all contained in~$X_i$.
	\end{description}
\end{definition}

The following proposition shows how to construct the Gaifman graph~$G(A)$ for a given matrix~$A$. It will be used in later proofs.

\begin{proposition} \label{proposition:build:gaifman:graph}
Given a matrix~$A\in\Z^{m\times n}$ in which each row contains at most~$r$ nonzero entries, the~$n\times n$ adjacency matrix of~$G(A)$ can be constructed in~$\Oh(nm + r^2m + n^2)$ time.
\end{proposition}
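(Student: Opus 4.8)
The plan is a direct construction of the adjacency matrix~$M$ of~$G(A)$ in three phases. First I would allocate~$M$ as an~$n\times n$ array and initialize all of its entries to~$0$; this takes~$\Oh(n^2)$ time and accounts for that term in the stated bound (and is anyway unavoidable, since the output itself is an~$n\times n$ matrix).

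Next I would process the rows of~$A$ one at a time. For a row~$j\in[m]$ I scan its~$n$ entries to build the list~$Y_j$ of column indices~$i$ with~$A[j,i]\neq 0$; summed over all rows this scanning costs~$\Oh(nm)$ time, and by hypothesis~$|Y_j|\leq r$. Given~$Y_j$, I then set~$M[i,i']\gets 1$ and~$M[i',i]\gets 1$ for every pair of distinct indices~$i,i'\in Y_j$. By the definition of the Gaifman graph this is exactly right: the pair~$\{x_i,x_{i'}\}$ should be an edge of~$G(A)$ if and only if some row has nonzero coefficients in both columns~$i$ and~$i'$, which is precisely what this loop records. Since~$|Y_j|\leq r$, there are at most~$r^2$ ordered pairs per row, so this phase costs~$\Oh(r^2 m)$ time overall. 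Writing~$1$ into an entry is idempotent, so rediscovering the same edge from several rows does no harm, and once all rows are processed~$M$ is the adjacency matrix of~$G(A)$. Adding up the three phases gives the claimed running time~$\Oh(nm + r^2 m + n^2)$.

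There is no real obstacle here; the construction and its correctness are routine, and the only points worth a sentence of care are (i)~that when~$A$ is stored densely one may have to read all~$n$ entries of a row to locate its at most~$r$ nonzero positions, which is what forces the~$\Oh(nm)$ term rather than something smaller like~$\Oh(rm)$, and (ii)~that the~$\Oh(n^2)$ term comes from producing the dense~$n\times n$ output. If instead one wanted the edge set of~$G(A)$ in a sparse representation, the~$n^2$ term could be dropped, but for the adjacency-matrix formulation used in the later proofs the bound as stated is what is needed.
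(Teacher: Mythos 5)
Your construction is correct and is essentially identical to the paper's own proof: initialize the $n\times n$ matrix in $\Oh(n^2)$ time, scan $A$ to collect the nonzero positions of each row in $\Oh(nm)$ time, and for each row mark the $\Oh(r^2)$ pairs of nonzero columns as adjacent. The extra remarks on why the $\Oh(nm)$ and $\Oh(n^2)$ terms are unavoidable for dense input/output are accurate but not needed.
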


\begin{proof}
Initialize an all-zero~$n \times n$ adjacency matrix~$M$ in~$\Oh(n^2)$ time. Scan through~$A$ to collect the indices of the non-zero entries in each row in~$\Oh(nm)$ time. For each row~$r$, for each of the~$\Oh(r^2)$ pairs~$A[r,i], A[r,j]$ of distinct nonzero entries in the row, set the corresponding entries~$M[i,j]$,~$M[j,i]$ of the adjacency matrix to one.
\end{proof}

We show how to obtain a nice Gaifman decomposition for a matrix~$A$ of width~$w$ from any tree decomposition of its Gaifman graph~$G(A)$ of width~$w$.

\begin{proposition} \label{proposition:gaifman:decomposition}
There is an algorithm that, given~$A \in \Z^{m \times n}$ and a width-$w$ tree decomposition~$(T,\X)$ of the Gaifman graph of~$A$, computes a nice Gaifman decomposition~$(T',\X',\ZZ')$ of~$A$ having width~$w$ in~$\Oh(w^2 \cdot |V(T)| + n \cdot m \cdot w)$ time.
\end{proposition}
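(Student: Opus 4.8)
The plan is to start from the given width-$w$ tree decomposition $(T,\X)$ of $G(A)$ and transform it in several stages into a nice Gaifman decomposition, tracking both the number of nodes and the running time at each stage. First I would apply the standard procedure (cf.~\cite{Bodlaender98}) that converts an arbitrary tree decomposition into a \emph{nice} one consisting only of leaf, join, introduce, and forget nodes, without increasing the width, and with $\Oh(n)$ nodes: root $T$ arbitrarily, then repeatedly split nodes with more than two children into a binary tree of join nodes, and interpolate chains of introduce/forget nodes along each edge where the bags differ. Since the width is $w$, each edge contributes at most $2(w+1)$ interpolation nodes, and the number of join nodes is bounded by the number of leaves, which in turn can be bounded by $\Oh(n)$ after contracting redundant bags; this stage costs $\Oh(w^2 \cdot |V(T)|)$ time to compute all the bag differences and build the new tree. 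This gives a nice tree decomposition $(T'',\X'')$ with at most $4n$ nodes (the constant can be verified by a careful count of leaf, introduce, forget, and join nodes).

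Next I would attach the constraint nodes. By Observation~\ref{observation:rows:clique}, for every row $j$ of $A$ the set $Y_j$ of variables with nonzero coefficient in row $j$ forms a clique in $G(A)$, so there is a node $i$ of $T''$ with $Y_j \subseteq X''_i$. To find such a node efficiently I would, for each row $j$, scan its nonzero entries (there are at most $w+1$ of them, since $Y_j$ fits in a bag of size $w+1$) and locate a bag containing all of them; doing this naively over all rows costs $\Oh(n \cdot m \cdot w)$ time, matching the stated bound. Having identified a host node $i$ for row $j$, I insert a new \textbf{constraint} node $i'$ subdividing the edge between $i$ and its parent (or as a new root if $i$ is the root), with $X''_{i'} = X''_i$ and $Z_{i'}$ a list of pointers to the nonzero coefficients of row $j$; I also need to leave $Z$ empty at all non-constraint nodes. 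Each of the $m$ rows adds exactly one node, so the final tree $T'$ has at most $4n + m$ nodes, establishing condition~(1). Conditions~(2) and~(3) are immediate from the construction, and the width is unchanged since constraint nodes copy an existing bag.

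The main obstacle I expect is purely bookkeeping rather than conceptual: getting the node-count constant to come out as exactly $4n + m$ requires care in how the nice-tree-decomposition conversion is performed (e.g., ensuring the tree has $\Oh(n)$ leaves by first removing bags that are subsets of a neighbor's bag, and bounding introduce/forget chains tightly), and one must double-check that inserting $m$ constraint nodes does not interact badly with the chain-length bounds. The time analysis also needs the observation that since each row has its nonzero pattern contained in some bag of size $\le w+1$, we may assume each row has at most $w+1$ nonzeros (otherwise $G(A)$ would contain a clique larger than the treewidth allows, contradicting that $(T,\X)$ has width $w$); this keeps the per-row work at $\Oh(w)$ up to the cost of locating the host bag, and the dominant term $\Oh(n \cdot m \cdot w)$ absorbs the search. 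I would also remark that one may instead locate host bags by a single bottom-up pass computing, for each variable, the set of bags containing it, but the crude bound already suffices for the statement.
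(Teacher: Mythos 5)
Your proposal is correct and follows essentially the same two-stage approach as the paper: first produce a nice tree decomposition of $G(A)$ with at most $4n$ nodes and width $w$, then splice in one constraint node per row above a bag containing that row's support, using the fact that each row has at most $w+1$ nonzeros because its support is a clique in $G(A)$. The only (immaterial) difference is that the paper obtains the nice tree decomposition via a chordal supergraph and a perfect elimination ordering (citing Kloks) rather than by direct join-splitting and introduce/forget interpolation, which sidesteps the bookkeeping you flag for the $4n$ node count.
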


\begin{proof}
\emph{Building a nice tree decomposition.} From the tree decomposition~$(T,\X)$ of~$G(A)$ we can derive a chordal supergraph of~$G(A)$ with maximum clique size bounded by~$w + 1$, by completing the vertices of each bag into a clique~\cite[Lemma 2.1.1]{Kloks94}. This can be done in~$\Oh(w^2 \cdot |V(T)|)$ time by scanning through the contents of the bags of~$(T,\X)$. A perfect elimination order of the chordal supergraph can be used to obtain a nice tree decomposition~$(T',\X')$ of~$G(A)$ having width~$w$ on at most~$4n$ nodes~\cite[Lemma 13.1.3]{Kloks94}. The nice tree decomposition consists of leaf, join, introduce, and forget nodes. 

\emph{Incorporating constraint nodes.} We augment the nice tree decomposition with constraint nodes to obtain a nice Gaifman decomposition of~$A$, as follows. We scan through matrix~$A$ and store, for each row, a list of pointers to the nonzero entries in that row. This takes~$\Oh(n \cdot m)$ time. Since a graph of treewidth~$w$ does not have cliques of size more than~$w+1$, by Observation~\ref{observation:rows:clique} each row of~$A$ has at most~$w+1$ nonzero entries. We maintain a list of the rows in~$A$ that have not yet been associated to a constraint bag in the Gaifman decomposition. We traverse the rooted tree~$T'$ in post-order. For each node~$i$, we inspect the corresponding bag~$X_i$ and test, for each constraint that is not yet represented by the decomposition, whether all variables involved in the constraint are contained in the bag. This can be determined in~$\Oh(w)$ time per constraint as follows. For each variable in~$X_i$ we test whether the corresponding row in~$A$ has a nonzero coefficient for that variable; if so, we increase a counter. If the final value of the counter matches the precomputed number of nonzero coefficients in the row then the bag contains all variables involved in the constraint. In that case we update the tree~$T'$ as follows: we make a new node~$i'$, assign~$X_{i'} := X_i$, and let~$Z'_{i'}$ be a copy of the precomputed list of pointers to the nonzero coefficients in the constraint. We make~$i'$ the parent of~$i$. If~$i$ is not the root, then it originally had a parent~$j$; we make~$j$ the parent of~$i'$ instead. This operation effectively splices a node of degree two into the tree. Since the newly introduced node has the same bag as~$i$, the relation between the bags of parents and children for the existing nodes of the tree remains unaltered (e.g., a forget node in the old tree will be a forget node in the new tree). The newly introduced node~$i'$ satisfies the requirements of a constraint node. We then continue processing the remainder of the tree to obtain the final nice Gaifman decomposition~$(T',X',\ZZ')$. As the original tree contains~$\Oh(n)$ nodes, while we spend~$\Oh(m \cdot w)$ time per node to incorporate the constraint bags, this phase of the algorithm takes~$\Oh(n \cdot m \cdot w)$ time. By Observation~\ref{observation:rows:clique}, for each constraint of~$A$ the involved variables occur together in some bag. Hence we will detect such a bag in the procedure, which results in a constraint node for the row. Since the nice tree decomposition that we started from contained at most~$4n$ nodes, while we introduce one node for each constraint in~$A$, the resulting tree has at most~$4n+m$ nodes. This shows that~$(T',X',\ZZ')$ satisfies all properties of a nice Gaifman decomposition and concludes the proof.
\end{proof}

\subsection{Feasibility on Gaifman graphs of bounded treewidth}\label{section:treewidth:basic}

We discuss the influence of treewidth on the complexity of \ILPF. It turns out that for unbounded domain variables the problem remains weakly \NP-hard on instances with Gaifman graphs of treewidth at most two (Theorem~\ref{theorem:unboundeddomain:boundedtw:hardness}). On the other hand, the problem can be solved by a simple dynamic programming algorithm with runtime~$\Oh^*(d^{w+1})$, where~$d$ is the domain size and~$w$ denotes the width of a given tree decomposition of the Gaifman graph (Theorem~\ref{theorem:boundeddomain:boundedtreewidth:dp}). In other words, the problem is fixed-parameter tractable in terms of~$d+w$, and efficiently solvable for bounded treewidth and~$d$ polynomially bounded in the input size.

Both results are not hard to prove and fixed-parameter tractability of \ILPF{}($d+w$) can also be derived from Courcelle's theorem (cf.~\cite[Corollary 11.43]{FlumG06}). Nevertheless, for the sake of self-containment and concrete runtime bounds we provide direct proofs. Theorem~\ref{theorem:boundeddomain:boundedtreewidth:dp} is a subroutine of our protrusion reduction algorithm.

\begin{theorem}\label{theorem:unboundeddomain:boundedtw:hardness}
\problem{ILP Feasibility} remains weakly \NP-hard when restricted to instances $(A,b)$ whose Gaifman graph~$G(A)$ has treewidth two.
\end{theorem}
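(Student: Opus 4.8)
The plan is to reduce from a weakly \NP-hard number problem, for instance \probname{Subset Sum} (or \probname{Partition}): given positive integers $a_1,\dots,a_n$ and a target $t$ encoded in binary, decide whether there is $S\subseteq[n]$ with $\sum_{i\in S}a_i=t$. Since \probname{Subset Sum} is \NP-hard but admits a pseudopolynomial algorithm, a polynomial-time reduction to \ILPF in which the numbers $a_i,t$ are merely copied (hence remain binary-encoded) yields weak \NP-hardness; the only nontrivial point is to make sure the produced Gaifman graph has treewidth two.

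First, the idea for controlling the Gaifman graph is to \emph{serialize} the summation via prefix sums. Besides binary indicator variables $x_1,\dots,x_n$, introduce integer variables $s_0,s_1,\dots,s_n$ and use the constraints $0\le x_i\le 1$ for each $i\in[n]$ (which over the integers forces $x_i\in\{0,1\}$), $s_0=0$, $s_n=t$, and $s_i=s_{i-1}+a_ix_i$ for each $i\in[n]$, every equality written as a pair of $\le$-inequalities. Then any feasible assignment satisfies $s_n=\sum_{i=1}^n a_ix_i$, so the instance is feasible exactly when $\{i: x_i=1\}$ is a solution of \probname{Subset Sum}; conversely, from a set $S$ summing to $t$ one obtains a feasible point by setting $x_i=1$ iff $i\in S$ and $s_i=\sum_{j\le i}a_jx_j$. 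No bound is placed on the $s_i$, which is exactly what makes the numbers essential and the hardness ``weak'' rather than ``strong''.

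Then one checks the treewidth bound. Only the recurrence constraints contribute edges to $G(A)$ — all remaining constraints are unary in $x_i$, $s_0$, or $s_n$ — so $G(A)$ is the path $s_0-s_1-\dots-s_n$ together with, for each $i$, a vertex $x_i$ adjacent to both $s_{i-1}$ and $s_i$, i.e.\ a ``path of triangles''. Taking the bags $B_i=\{s_{i-1},s_i,x_i\}$ for $i\in[n]$ and joining $B_1,B_2,\dots,B_n$ into a path gives a valid tree decomposition (consecutive bags share $s_i$, and every edge and vertex lies in some bag) of width $3-1=2$; hence $\tw(G(A))\le 2$, with equality since $G(A)$ contains a triangle. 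Finally one records that the reduction runs in polynomial time, as it only copies the input numbers.

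I do not expect a genuine obstacle here: the only things needing care are the routine verification that the reduction maps \yes-instances to \yes-instances and back, and the bookkeeping that the unary domain/boundary constraints contribute no edges, so that $G(A)$ has exactly the claimed shape. It is also worth stating explicitly which weakly \NP-hard problem is used and citing its hardness.
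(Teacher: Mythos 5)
Your proposal is correct and essentially identical to the paper's proof: both reduce from \problem{Subset Sum} by serializing the sum into prefix-sum variables chained by constraints $s_i = s_{i-1} + a_i x_i$, yielding a ``path of triangles'' Gaifman graph with the obvious width-two path decomposition. The only (immaterial) difference is that you add an explicit variable $s_0 = 0$ where the paper starts directly with $y_1 = a_1 x_1$.
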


\begin{proof}
We give a straightforward reduction from \problem{Subset Sum} to this restricted variant of \ILPF. Recall that an instance of \problem{Subset Sum} consists of a set~$\{a_1,\ldots,a_n\}$ of integers and a target value~$b\in\Z$; the task is to determine whether some subset of the~$n$ integers sums to exactly~$b$.
Given such an instance~$(\{a_1,\ldots,a_n\},b)$ we create~$n$ variables~$x_1,\ldots,x_n$ that encode the selection of a subset and~$n$ variables~$y_1,\ldots,y_n$ that effectively store partial sums; the~$x_i$ variables are constrained to domain~$\{0,1\}$. Concretely, we aim to compute
\[
y_j=\sum_{i=1}^j a_ix_i
\]
for all~$j\in\{1,\ldots,n\}$. Clearly, this is correctly enforced by the following constraints.
\begin{align*}
y_1&=a_1x_1\\
y_j&=a_jx_j+y_{j-1}&j\in\{2,\ldots,n\}
\end{align*}
Finally, we enforce~$y_n=b$. Clearly, a subset of the~$n$ integers with sum~$b$ translates canonically to a feasible assignment of the variables, and vice versa.

It remains to check the treewidth of the corresponding Gaifman graph. We note that for this purpose it is not necessary to split equalities into inequalities or performing similar normalizations since it does not affect whether sets of variables occur in at least one shared constraint. Thus, we can make a tree decomposition (in fact, a path decomposition) consisting of a path on nodes~$1,\ldots,n$ with bags~$X_1,\ldots,X_n$ where~$X_1=\{x_1,y_1\}$ and~$X_j=\{x_j,y_{j-1},y_j\}$ for~$j\geq 2$. Clearly, for each constraint there is a bag containing all its variables, correctly handling all edges of the Gaifman graph, and the bags containing any variable form a connected subtree. It follows that the Gaifman graph of the constructed instance has treewidth at most two.
\end{proof}

\begin{theorem}\label{theorem:boundeddomain:boundedtreewidth:dp}
Instances~$(A\in\Z^{m\times n},b)$ of \ILPF of domain size~$d$ with a given nice Gaifman decomposition of width~$w$ can be solved in time $\Oh(d^{w+1} \cdot w \cdot (n+m))$.
\end{theorem}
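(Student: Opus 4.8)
The plan is to perform bottom-up dynamic programming on the rooted tree $T'$ of the nice Gaifman decomposition, maintaining for each node $i$ a boolean table indexed by partial assignments to the variables in the bag $X_i$. Concretely, for a node $i$ let $T'_i$ denote the subtree rooted at $i$ and let $V_i$ be the set of variables appearing in bags of $T'_i$; for an assignment $\phi \colon X_i \to \{0,\ldots,d-1\}$ we set $\mathrm{tab}_i[\phi] = \textbf{true}$ iff $\phi$ extends to an assignment $\psi \colon V_i \to \{0,\ldots,d-1\}$ satisfying every constraint stored at a node of $T'_i$. Since each bag has at most $w+1$ variables and each variable ranges over $d$ values, each table has at most $d^{w+1}$ entries. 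The answer to the instance is \yes iff the (unique, by the decomposition) table at the root has at least one \textbf{true} entry; note the domain constraints $-x_i \le d'$ and $x_i \le d''$ are, by Observation~\ref{observation:rows:clique}, each represented at some constraint node, so restricting to $\{0,\ldots,d-1\}$ loses nothing.

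The core is the case analysis for computing $\mathrm{tab}_i$ from the tables of the children, one case per node type. For a \textbf{leaf} $i$ with $X_i = \{v\}$, all $d$ assignments are \textbf{true}. For an \textbf{introduce} node with child $j$ and $X_i = X_j \cup \{v\}$, set $\mathrm{tab}_i[\phi] = \mathrm{tab}_j[\phi|_{X_j}]$ for every extension $\phi$ of an assignment of $X_j$ by an arbitrary value of $v$; correctness uses that $v$ has no neighbors in $V_j$ (standard treewidth argument: its neighbors introduced so far all lie in $X_j$), so no constraint of $T'_j$ mentions $v$. For a \textbf{forget} node with $X_i = X_j \setminus \{v\}$, set $\mathrm{tab}_i[\phi] = \bigvee_{c \in \{0,\ldots,d-1\}} \mathrm{tab}_j[\phi \cup \{v \mapsto c\}]$. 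For a \textbf{join} node with children $j,j'$ and $X_i = X_j = X_{j'}$, set $\mathrm{tab}_i[\phi] = \mathrm{tab}_j[\phi] \wedge \mathrm{tab}_{j'}[\phi]$; correctness uses that $X_i$ separates $V_j \setminus X_i$ from $V_{j'} \setminus X_i$ in the Gaifman graph, so two extensions agreeing on $X_i$ can be glued without creating a violated constraint, and every constraint of $T'_i$ lives entirely in one of the two subtrees (its variables form a clique, hence lie on one side). For a \textbf{constraint} node with child $j$ and stored constraint $c$ (all of whose variables lie in $X_i = X_j$), set $\mathrm{tab}_i[\phi] = \mathrm{tab}_j[\phi]$ if $\phi$ satisfies $c$, and \textbf{false} otherwise; here $Z_i$ gives direct pointers to the nonzero coefficients of $c$, so the test costs $\Oh(w)$ arithmetic operations.

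Correctness of the whole algorithm follows by induction on $T'$ from these five claims, together with the observation that every constraint of $A$ is stored at exactly one constraint node of $T'$ (property (2) of Definition~\ref{definition:gaifman:decomposition}) so that the root table indeed tests feasibility of the full system. For the running time: the tree has $\Oh(n+m)$ nodes, at each node the table has at most $d^{w+1}$ entries, and producing each entry costs $\Oh(w)$ (an $\Oh(1)$ boolean operation for join/forget/introduce/leaf, or an $\Oh(w)$-cost constraint evaluation at a constraint node — and there are only $m$ such nodes); indexing a table entry by a bag-assignment also costs $\Oh(w)$. This yields the stated bound $\Oh(d^{w+1} \cdot w \cdot (n+m))$.

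The main obstacle is making the separator arguments for the introduce and join steps airtight, i.e.\ justifying that no constraint stored below the current node involves a variable outside the current bag in a way that the table has not already accounted for. This is exactly where the clique property of Observation~\ref{observation:rows:clique} and the connectivity condition (iii) of tree decompositions are essential: any constraint's variable set is a clique and hence contained in a single bag, and the set of nodes whose bag contains a fixed variable is connected, so a variable that has been forgotten can never reappear and a constraint can never straddle a join. Once these structural facts are invoked the five update rules are routine, and the bookkeeping needed for the running time is immediate from the $\Oh(4n+m)$ bound on $|V(T')|$ and the pointer lists $Z_i$.
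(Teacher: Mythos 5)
Your proposal is essentially the paper's own proof: the same bottom-up dynamic program over the nice Gaifman decomposition with boolean tables indexed by bag assignments, the same five update rules for leaf, introduce, forget, join, and constraint nodes, and the same running-time accounting via the $\Oh(n+m)$ node bound and the $\Oh(w)$-time constraint evaluation using the pointer lists $Z_i$. The only difference is that you spell out the separator/gluing arguments for introduce and join nodes slightly more explicitly than the paper does, which is fine.
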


\begin{proof}
Let~$(A,b)$ denote an instance of \ILPF of domain size~$d$ and let~$(T,\X,\ZZ)$ denote a given nice Gaifman decomposition of width~$w$ for~$A$. We describe a simple dynamic programming algorithm for testing whether there exists an integer vector~$x$ such that~$Ax\leq b$. For ease of presentation we assume that each domain is~$\{0,\ldots,d-1\}$; it is straightforward, but technical, to use arbitrary (possibly different) domains of at most~$d$ values for each variable.

With a node~$i$, apart from its bag~$X_i$, we associate the set~$V_i$ of all variables appearing in~$X_i$ or in the bag~$X_j$ of any descendant~$j$ of~$i$. By~$C_i$ we denote all constraints~$p$ (rows of~$A$) that the Gaifman decomposition maps to a descendant of~$i$ (including~$i$ itself).

Our goal is to compute for each node~$i$ the set of all feasible assignments to the variables in~$X_i$ \emph{when taking into account only constraints in~$C_i$}. The set of feasible assignments for~$X_i$ will be recorded in a table~$F_i$ indexed by tuples~$\{0,\ldots,d-1\}^\ell$ where~$\ell=|X_i|$. The entry at~$F_i(a_1,\ldots,a_\ell)$ corresponds to assigning~$x_{i_1}=a_1,\ldots,x_{i_\ell}=a_\ell$ where~$X_i=\{x_{i_1},\ldots,x_{i_\ell}\}$ and~$i_1<\ldots<i_\ell$. Its value will be~$1$ if we determined that there is a feasible assignment for~$V_i$ with respect to constraints~$C_i$ that extends~$x_{i_1}=a_1,\ldots,x_{i_\ell}=a_\ell$; otherwise the value is~$0$. We will now describe how to compute the tables~$F_i$ in a bottom-up manner, by outlining the behavior on each node type.
\begin{itemize}
 \item \emph{Leaf node~$i$ with bag~$X_i=\{x_q\}$.} Since~$C_i = \emptyset$ as~$i$ has no children, we simply have~$F_i(a)=1$ for all~$a\in\{0,\ldots,d-1\}$ which is computed in~$\Oh(d)$ time.
 
 \item \emph{Forget node~$i$ with child~$j$ and bag~$X_i=X_j\setminus\{x_q\}$.} It suffices to project~$F_j$ down to only contain information about the feasible assignments for~$X_i=X_j\setminus \{x_q\}$. To make this concrete, let~$X_j=\{x_{i_1},\ldots,x_{i_\ell}\}$ with~$\ell=|X_j|=|X_i|+1$ and~$x_q=x_{i_s}$. Thus,~$X_i=\{x_{i_1},\ldots,x_{i_{s-1}},x_{i_{s+1}},\ldots,x_{i_\ell}\}$. We let
 \[
  F_i(a_1,\ldots,a_{s-1},a_{s+1},\ldots,a_\ell)=\max_{a\in\{0,\ldots,d-1\}}F_j(a_1,\ldots,a_{s-1},a,a_{s+1},\ldots,a_\ell),
 \]
 for all~$(a_1,\ldots,a_{s-1},a_{s+1},\ldots,a_\ell)\in\{0,\ldots,d-1\}^{|X_i|}$, i.e., we set~$F_i(\ldots)$ to~$1$ if some choice of~$a$ extends the assignment to one that is feasible for~$X_j$ with respect to all constraints on~$V_j=V_i$; else it takes value~$0$. Each table entry takes time~$\Oh(d)$ and in total we spend time~$\Oh(d\cdot d^{|X_i|})\subseteq\Oh(d^{w+1})$; note that~$X_i=X_j\setminus\{x_q\}$ implies~$|X_i|\leq w$.
 
 \item \emph{Introduce node~$i$ with child~$j$ and bag~$X_i=X_j\cup\{x_q\}$.} Let~$X_i=\{x_{i_1},\ldots,x_{i_\ell}\}$ with~$\ell=|X_i|$ and~$x_q=x_{i_s}$, implying that~$X_j=\{x_{i_1},\ldots,x_{i_{s-1}},x_{i_{s+1}},\ldots,x_{i_\ell}\}$. As~$i$ is an introduce node we have~$C_i = C_j$, and therefore~$C_i$ does not constrain the value of~$x_q$ in any way. We set~$F_i$ as follows.
 \[
  F_i(a_1,\ldots,a_\ell)=F_j(a_1,\ldots,a_{s-1},a_{s+1},\ldots,a_\ell),
 \]
 for all~$(a_1,\ldots,a_\ell)\in\{0,\ldots,d-1\}^{|X_i|}$. We use time~$\Oh(d^{|X_i|})\subseteq \Oh(d^{w+1})$.
 
 \item \emph{Join node~$i$ with children~$j,j'$ and bag~$X_i=X_j=X_{j'}$.} At a join node~$i$, from child~$j$ we get all assignments that are feasible for~$X_i=X_j$ regarding constraints~$C_j$, and from child~$j'$ we get the feasible assignments for constraints~$C_{j'}$. We have~$C_i = C_j \cup C_{j'}$, and therefore an assignment is feasible for~$C_i$ if and only if it is feasible for both~$C_j$ and~$C_{j'}$. It suffices to merge the information of the two child nodes. Letting~$X_i=\{x_{i_1},\ldots,x_{i_\ell}\}$ with~$\ell=|X_i|$, we set
 \[
  F_i(a_1,\ldots,a_\ell)=\min\{F_j(a_1,\ldots,a_\ell),F_{j'}(a_1,\ldots,a_\ell)\},
 \]
 for all~$(a_1,\ldots,a_\ell)\in\{0,\ldots,d-1\}^{|X_i|}$. This takes time~$\Oh(d^{|X_i|})\subseteq\Oh(d^{w+1})$.

	\item \emph{Constraint node~$i$ with child~$j$ mapped to row~$p$.} Let~$X_i=\{x_{i_1},\ldots,x_{i_\ell}\}$ with~$\ell=|X_i|$. We know that~$C_i = C_j \cup \{p\}$ and therefore an assignment of values to~$X_i$ is feasible with respect to the constraints~$C_i$ if and only if it is feasible for the constraints~$C_j$ and also satisfies constraint~$p$. We therefore initialize by setting~$F_i$ as follows.
 \[
  F_i(a_1,\ldots,a_\ell)=F_j(a_1,\ldots,a_\ell),
 \]
 for all~$(a_1,\ldots,a_\ell)\in\{0,\ldots,d-1\}^{|X_i|}$. Now, we need to discard those assignments~$(a_1,\ldots,a_\ell)$ that are not feasible for the additional constraint~$p$. For each~$(a_1,\ldots,a_\ell)$ with~$F_i(a_1,\ldots,a_\ell)=1$ we process the row~$p$. Using the pointers to the nonzero coefficients in row~$p$ that are stored in~$Z_i$, along with the fact that all variables constrained by~$p$ are contained in bag~$X_i$ by Definition~\ref{definition:gaifman:decomposition}, we can evaluate the constraint in~$\Oh(w)$ time. If the sum of values times coefficients exceeds~$b[p]$ then the constraint is not satisfied and we set~$F_i(a_1,\ldots,a_\ell)$ to~$0$. This takes time~$\Oh(|X_i|)\subseteq\Oh(w)$ per assignment. In total we use time~$\Oh(w \cdot d^{|X_i|})\subseteq \Oh(d^{w+1}w)$.
\end{itemize}

At the end of the computation we have the table~$F_r$ where~$r$ denotes the root of the tree decomposition~$(T,\X)$. By definition, it encodes all assignments to~$X_r$ that can be extended to assignments that are feasible for all constraints in~$C_r$. By Definition~\ref{definition:gaifman:decomposition} the set~$C_r$ for the root contains all constraints of~$A$ and thus any entry~$F_r(a_1,\ldots,a_{|X_r|})=1$ implies that~$Ax\leq b$ has an integer solution. Conversely, any integer solution must lead to a nonzero entry in~$F_r$. By Definition~\ref{definition:gaifman:decomposition} the number of nodes in~$T$ is~$\Oh(n+m)$. The total time needed for the dynamic programming is therefore bounded by~$\Oh(d^{w+1} \cdot w \cdot (n+m))$.
\end{proof}

If a nice Gaifman decomposition is not given, one can be computed by combining an algorithm for computing tree decompositions~\cite{Bodlaender96_lintimetw,BodlaenderDDFLP13} with Proposition~\ref{proposition:gaifman:decomposition}.

\subsection{Protrusion reductions}\label{section:treewidth:protrusionreduction}
To formulate the protrusion replacement rule, which is the main algorithmic asset used in this section, we need some terminology. For a non-negative integer~$r$, an \emph{$r$-boundaried ILP} is an instance~$(A,b)$ of \ILPF in which~$r$ distinct \emph{boundary variables}~$x_{t_1}, \ldots, x_{t_r}$ are distinguished among the total variable set~$\{x_1, \ldots, x_n\}$. If~$Y = (x_{i_1}, \ldots, x_{i_r})$ is a sequence of variables of~$Ax \leq b$, we will also use~$(A,b,Y)$ to denote the corresponding $r$-boundaried ILP. The \emph{feasible boundary assignments} of a boundaried ILP are those assignments to the boundary variables that can be completed into an assignment that is feasible for the entire system.

\begin{definition}
Two $r$-boundaried ILPs~$(A,b,x_{t_1},\ldots,x_{t_r})$ and~$(A',b',x'_{t'_1},\ldots,x'_{t'_r})$ are \emph{equivalent} if they have the same feasible boundary assignments.
\end{definition}

The following lemma shows how to compute equivalent boundaried ILPs for any boundaried input ILP. The replacement system is built by adding, for each infeasible boundary assignment, a set of constraints on auxiliary variables that explicitly blocks that assignment.

\begin{lemma}\label{lemma:tw:boundariedilp:replacement}
There is an algorithm with the following specifications: (1) It gets as input an~$r$-boundaried ILP~$(A,b,x_{t_1}, \ldots, x_{t_r})$ with domain size~$d$, with~$A\in\Z^{m\times n}$,~$b\in\Z^m$, and a width-$w$ nice Gaifman decomposition~$(T,\X,\ZZ)$ of~$A$. (2) Given such an input it takes time~$\Oh(d^r \cdot (d^{w+1}w (n+m)) + r^2 \cdot d^{2r})$. (3) Its output is an equivalent~$r$-boundaried ILP~$(A',b',x'_{t'_1},\ldots,x'_{t'_r})$ of domain size~$d$ containing~$\Oh(r \cdot d^r)$ variables and constraints, and all entries of~$(A',b')$ in~$\{-d,\ldots,d\}$.
\end{lemma}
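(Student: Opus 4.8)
The plan is to use the dynamic programming algorithm of Theorem~\ref{theorem:boundeddomain:boundedtreewidth:dp} to enumerate all feasible boundary assignments, and then build a fresh ILP that has exactly those boundary assignments as its feasible ones, by explicitly blocking each \emph{infeasible} boundary assignment with a small gadget of auxiliary variables.

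\textbf{Computing the list of feasible boundary assignments.} First I would iterate over all~$d^r$ assignments~$\alpha\in\{0,\ldots,d-1\}^r$ to the boundary variables~$x_{t_1},\ldots,x_{t_r}$. For each~$\alpha$, form the ILP~$(A,b)$ augmented with the~$2r$ constraints~$x_{t_i}\le\alpha_i$ and~$-x_{t_i}\le-\alpha_i$; this fixes the boundary to~$\alpha$ and does not increase the domain size beyond~$d$. Adding these constraints to the given nice Gaifman decomposition is cheap: each new constraint involves a single variable, so it can be inserted as a constraint node (e.g., directly above a leaf or any node whose bag contains that variable), keeping the width at~$w$ and the number of nodes~$\Oh(n+m)$. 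Running Theorem~\ref{theorem:boundeddomain:boundedtreewidth:dp} on this instance decides in time~$\Oh(d^{w+1}w(n+m))$ whether~$\alpha$ is feasible. Doing this for all~$\alpha$ costs~$\Oh(d^r\cdot d^{w+1}w(n+m))$, matching the first term of the claimed running time, and yields the set~$\mathcal{I}\subseteq\{0,\ldots,d-1\}^r$ of \emph{infeasible} boundary assignments, with~$|\mathcal{I}|\le d^r$.

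\textbf{Building the replacement ILP.} The output ILP~$(A',b')$ has the~$r$ boundary variables~$x'_{t'_1},\ldots,x'_{t'_r}$ (with the same domain constraints~$0\le x'_{t'_i}\le d-1$), plus, for each~$\alpha\in\mathcal{I}$, a block of~$\Oh(r)$ auxiliary variables and~$\Oh(r)$ constraints that is satisfiable if and only if the boundary is \emph{not} set to~$\alpha$. A clean way to encode ``$x'_{t'_1},\ldots,x'_{t'_r})\neq\alpha$'', i.e.\ ``$\bigvee_{i=1}^r x'_{t'_i}\neq\alpha_i$'', is: introduce binary indicator variables~$z_1,\ldots,z_r$ (forced into~$\{0,1\}$) and one ``below/above'' selector per coordinate; use constraints of the form~$x'_{t'_i}\ge\alpha_i+1-(d-1)(1-z_i^{+})$ and~$x'_{t'_i}\le\alpha_i-1+(d-1)(1-z_i^{-})$ so that~$z_i^{+}=1$ certifies~$x'_{t'_i}\ge\alpha_i+1$ and~$z_i^{-}=1$ certifies~$x'_{t'_i}\le\alpha_i-1$, together with~$\sum_{i=1}^r (z_i^{+}+z_i^{-})\ge 1$ to force at least one coordinate to genuinely differ. (One must also tie~$z_i^{+}+z_i^{-}\le 1$ and handle boundary coordinates~$\alpha_i\in\{0,d-1\}$ where only one direction is possible; these are routine.) Every coefficient used is in~$\{-d,\ldots,d\}$ and every right-hand side can be taken in that range after the standard rewriting, so the entry bound holds. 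The block has~$\Oh(r)$ variables and~$\Oh(r)$ constraints, so over all~$\alpha\in\mathcal{I}$ we get~$\Oh(r\cdot d^r)$ variables and constraints; writing this down, including testing for degenerate coordinates, costs~$\Oh(r^2 d^{2r})$, matching the second term of the running time. Correctness is immediate: an assignment to~$(x'_{t'_1},\ldots,x'_{t'_r})$ extends to a feasible assignment of~$(A',b')$ iff each block can be satisfied iff the boundary tuple avoids every~$\alpha\in\mathcal{I}$ iff it is a feasible boundary assignment of the original ILP. Since the blocks share no variables (other than the boundary variables) they can always be satisfied simultaneously whenever each can be satisfied individually.

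\textbf{Main obstacle.} The conceptual content is entirely in the first part — that Theorem~\ref{theorem:boundeddomain:boundedtreewidth:dp} really does let us read off all feasible boundary assignments by~$d^r$ independent solves on slightly modified decompositions — and this is straightforward. The only genuinely fiddly point is the gadget design in the second part: one needs a fixed-size system of inequalities over bounded-domain integer variables that encodes a disequality~$\neq\alpha$ using only small coefficients, and one must be careful that auxiliary variables stay within domain~$d$ (binary variables trivially satisfy~$0\le z\le d-1$ since~$d\ge 2$), that the ``big-$M$'' terms are expressible with coefficients of absolute value at most~$d$, and that degenerate coordinates do not break the encoding. None of this is deep, but it is where the bookkeeping lives; I would present one explicit block and verify its feasible projections coincide exactly with~$\{0,\ldots,d-1\}^r\setminus\{\alpha\}$, then assemble the blocks.
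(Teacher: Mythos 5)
Your proposal is correct and follows essentially the same two-step route as the paper: enumerate all $d^r$ boundary assignments, decide each by one run of the dynamic program of Theorem~\ref{theorem:boundeddomain:boundedtreewidth:dp} on the (unchanged-width) Gaifman decomposition, and then block each infeasible tuple by an $\Oh(r)$-size disequality gadget, giving $\Oh(r\cdot d^r)$ variables and constraints in total. The only difference is the concrete gadget: the paper writes $x'_{t_i}=a^j_i+u^j_i-d\cdot v^j_i$ with $u^j_i\in\{0,\ldots,d-1\}$, $v^j_i\in\{0,1\}$ and $\sum_i u^j_i\geq 1$, which sidesteps the degenerate-coordinate and big-$M$-size casework you flag (and keeps all entries in $\{-d,\ldots,d\}$ directly), but your above/below-indicator variant can be made to work with the same asymptotics.
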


\begin{proof}
The lemma is a combination of two ingredients. Using Theorem~\ref{theorem:boundeddomain:boundedtreewidth:dp} we can efficiently test whether a given assignment to the boundary variables can be extended to a feasible solution for~$(A,b)$. Then, knowing the set of all assignments that can be feasibly extended, we block each infeasible partial assignment by introducing a small number of variables and constraints, allowing us to fully discard the original constraints and non-boundary variables. The latter step uses a construction from an earlier paper by Kratsch~\cite[Theorem 5]{Kratsch13_ilp1}, which we repeat here for completeness.

\emph{Finding feasible partial assignments.} Consider a partial assignment~$x_{t_1} = a_1, \ldots, x_{t_r} = a_r$ to the boundary variables with~$a_i \in \{0, \ldots, d-1\}$. To determine whether this assignment can be extended to a feasible assignment for~$(A,b)$, we enforce these equalities in the ILP. Concretely, for each~$i$ we replace the domain-bounding constraints for~$x_{t_i}$ in the system~$(A,b)$ by~$- x_i \geq a_i$ and~$x_i \leq a_i$. We obtain a new system~$(\hat{A}, \hat{b})$ with~$m$ constraints. Since the modified constraints involve only a single variable each, the modifications do not affect the Gaifman graph:~$G(A) = G(\hat{A})$. Moreover, the modifications do not affect which entries in the constraint matrix have nonzero values, implying that~$(T,\X,\ZZ)$ also serves as a nice Gaifman decomposition of~$\hat{A}$. The partial assignment to the boundary variables can be feasibly extended if and only if~$(\hat{A}, \hat{b})$ is feasible. We may invoke the algorithm of Theorem~\ref{theorem:boundeddomain:boundedtreewidth:dp} with~$(T,\X,\ZZ)$ to decide feasibility in~$\Oh(d^{w+1} w (n+m))$ time. By iterating over all~$d^r$ possible partial assignments to the boundary variables with values in~$\{0, \ldots, d-1\}$, we determine which partial assignments can be feasibly extended. Let~$\L$ be a list of the partial assignments that \emph{can not} be extended to a feasible solution for~$(A,b)$.

\emph{Blocking infeasible partial assignments.} Using~$\L$ we construct an equivalent $r$-boundaried ILP~$(A',b',x'_{t_1},\ldots,x'_{t_r})$ as follows. Based on the length of~$\L$ we can determine the number of variables that will be used in~$(A',b')$, which helps to write down the constraint matrix efficiently. The number of variables in the new system will be~$r + 2r |\L|$, the number of constraints will be~$2r + (6r+1) |\L|$. The system is built as follows. For each boundary variable~$x_{t_i}$ of~$(A,b)$ we introduce a corresponding variable~$x'_{t'_i}$ in~$(A',b')$ and constrain its domain to~$\{0, \ldots, d-1\}$ using two inequalities; this yields~$r$ variables and~$2r$ constraints. 

For each infeasible partial assignment~$a^j = (a^j_1, \ldots, a^j_r)$ in the list~$\L$, we add new variables~$u^j_i$ and~$v^j_i$ for all~$i \in [r]$, together with the following constraints:

\begin{align}
\forall i\in [r]:&& x'_{t_i} &= a^j_i+ u^j_i- d\cdot v^j_i && u^j_i\in\{0,\ldots,d-1\}, v^j_i\in\{0,1\}\label{constraint:checkequal}\\
&&\sum_{i=1}^r u^j_i&\geq 1 \label{constraint:checkdummies}
\end{align}

We claim that an assignment to the boundary variables can be extended to the newly introduced variables to satisfy the constraints for~$a^j$ if and only if the partial assignment is not~$a^j$. In the first direction, assume that~$(x'_{t_1},\ldots,x'_{t_r})=(a^j_1,\ldots,a^j_r)$. Then~$0=x'_{t_i}-a^j_i=u^j_i-d\cdot v^j_i$, implying that~$u^j_i=v^j_i=0$ (taking into account the domains of~$u^j_i$ and~$v^j_i$) for all~$i$. Therefore constraint~\cref{constraint:checkdummies} is violated which shows that the partial assignment can not be feasibly extended. In the other direction, if~$(x'_{t_1},\ldots,x'_{t_r})\neq (a^j_1,\ldots,a^j_r)$, then there is a position~$i$ with~$x'_{t_i} \neq a^j_i$. It follows that~$0<|x'_{t_i}-a^j_i|<d$ (due to the domain of~$x'_{t_i}$) which in turn implies that~$u^j_i\neq 0$ since the contribution of~$d\cdot v^j_i$ to the equality~\cref{constraint:checkequal} is a multiple of~$d$. Therefore constraint~\cref{constraint:checkdummies} is fulfilled.

The only coefficients used in the constraints that block a partial assignment are~$\{-d, -1, 0, 1, d\}$; as the equalities of~\cref{constraint:checkequal} are represented using two inequalities, we get coefficients~$+d$ and~$-d$. The values of~$a^j_i$, which appear in the right-hand side vector~$b'$, are in~$\{-(d-1),\ldots,0, \ldots, d-1\}$ since they arise from the coordinates of an attempted partial assignment (with negative values from representing equalities by inequalities). As the coefficients of the domain-enforcing constraints are all plus or minus one, with right-hand side in~$\{-(d-1),\ldots,0, \ldots, d-1\}$, the structure of the constructed system~$(A',b')$ matches that described in the lemma statement. For each infeasible partial assignment we introduce~$2r$ variables with~$2$ domain-enforcing constraints each, along with~$2r$ inequalities to express the~$r$ equalities of~\cref{constraint:checkequal}, and a single constraint for~\cref{constraint:checkdummies}. The total system therefore has~$r + 2r |\L|$ variables and~$2r + (6r + 1) |\L|$ constraints. Since there are only~$d^r$ partial assignments that we check, we have~$|\L| \leq d^r$ and therefore the system has~$\Oh(r \cdot d^r)$ variables and constraints. Consequently, the constraint matrix~$A'$ has~$\Oh(r^2 \cdot d^{2r})$ entries and it is not hard to verify that it can be written down in linear time. It remains to prove that the $r$-boundaried ILP~$(A',b',x'_{t'_1}, \ldots, x'_{t'_r})$ is equivalent to the input structure.

Consider an assignment to the boundary variables. If the assignment can be extended to a feasible assignment for~$(A,b)$, then the boundary variables take values in~$\{0, \ldots, d-1\}$ (since~$(A,b)$ has domain size~$d$) and therefore satisfy the domain restrictions of~$(A',b')$. Since the partial assignment has a feasible extension, it is not on the list~$\L$. For each set of constraints that was added to block an infeasible partial assignment~$a^j$, the claim above therefore shows that the related variables~$u^j_i$ and~$v^j_i$ can be set to satisfy their constraints. Hence the partial assignment can be extended to a feasible assignment for~$(A',b')$. In the reverse direction, suppose that a partial assignment can be feasibly extended for~$(A',b')$. By the claim above, the partial assignment differs from each of the blocked points on~$\L$. Since~$\L$ contains all infeasible partial assignments with values in~$\{0, \ldots, d-1\}^r$, and feasible partial assignments for~$(A',b')$ belong to~$\{0, \ldots, d-1\}^r$ since we restricted the domain of~$x'_{t'_1}, \ldots, x'_{t'_r}$, there is an extension feasible for~$(A,b)$. This shows that the two $r$-boundaried ILPs are indeed equivalent, and concludes the proof.
\end{proof}

Intuitively, we can simplify an \ILPF instance $(A,b)$ with a given protrusion decomposition by replacing all protrusions with equivalent boundaried ILPs of small size via Lemma~\ref{lemma:tw:boundariedilp:replacement}. We get a new instance containing all replacement constraints plus all original constraints that are fully contained in the shared part.

\begin{theorem} \label{theorem:shrink:protrusion:decomposition}
For each constant~$r$ there is an algorithm that, given an instance $(A,b)$ of \ILPF with domain size~$d$, along with a~$(k,r)$-protrusion decomposition~$Y_0 \dotcup Y_1 \dotcup \ldots \dotcup Y_\ell$ of the given Gaifman graph~$G(A)$, outputs an equivalent instance~$(A',b')$ of \ILPF with domain size~$d$ on~$\Oh(k \cdot d^r)$ variables in time~$\Oh(n \cdot m + d^{2r}(n + m) + k \cdot m \cdot d^{r} + k^2 \cdot d^{2r})$. Each constraint of~$(A',b')$ is either a constraint in~$(A,b)$ involving only variables from~$Y_0$, or one of~$\Oh(k \cdot d^r)$ new constraints with coefficients and right-hand side among~$\{-d, \ldots, d\}$.
\end{theorem}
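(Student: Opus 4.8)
The plan is to combine the per-protrusion replacement from Lemma~\ref{lemma:tw:boundariedilp:replacement} with a bookkeeping argument tracking how the replacement subsystems glue together along the shared part~$Y_0$. For each~$i \in [\ell]$, the set~$X_i := Y_i \cup N_G(Y_i)$ is an $r$-protrusion, so $G[X_i]$ has treewidth at most~$r-1$ and the boundary~$\partial_G(X_i) \subseteq N_G(Y_i) \subseteq Y_0$ has size at most~$r$. The first step is to turn each~$X_i$ into an $r$-boundaried ILP: take the constraints of~$(A,b)$ all of whose variables lie in~$X_i$, together with the domain-bounding constraints for all variables in~$X_i$, and designate the (at most~$r$) vertices of~$\partial_G(X_i)$ as boundary variables (padding with arbitrary variables of~$X_i$ if~$|\partial_G(X_i)| < r$, or more simply allowing fewer than~$r$ boundary variables). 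Crucially, by the definition of the Gaifman graph every constraint of~$(A,b)$ either has all its variables inside a single~$X_i$ (when its variable set is a clique contained in some~$X_i$) or has all its variables inside~$Y_0$: a constraint whose nonzero-coefficient variables are not all in one~$Y_i \cup N_G(Y_i)$ must, since those variables form a clique and the~$Y_i$ are pairwise non-adjacent and separated by~$Y_0$, have all its variables in~$Y_0$. Hence the constraint set of~$(A,b)$ partitions into the ``shared'' constraints (all variables in~$Y_0$) and, for each~$i$, the ``local'' constraints of the $i$-th boundaried ILP.

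\textbf{Second,} apply Lemma~\ref{lemma:tw:boundariedilp:replacement} to each of the~$\ell \leq k$ boundaried ILPs. To do this we need a width-$r{-}1$ nice Gaifman decomposition of the constraint submatrix for~$X_i$; this is obtained by computing an optimal tree decomposition of~$G[X_i]$ (possible since~$r$ is constant, e.g.\ via Bodlaender's algorithm, or by the~$5$-approximation with a slightly worse exponent) and feeding it to Proposition~\ref{proposition:gaifman:decomposition}. The lemma returns, for each~$i$, an equivalent $r$-boundaried ILP with~$\Oh(r \cdot d^r) = \Oh(d^r)$ variables and constraints, all entries in~$\{-d,\ldots,d\}$, and with the same feasible boundary assignments. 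We then form~$(A',b')$ by taking the disjoint union of all~$\ell$ replacement systems, \emph{identifying} for each~$i$ the boundary variable of the $i$-th replacement corresponding to a vertex~$v \in \partial_G(X_i) \subseteq Y_0$ with a single global copy of variable~$v$, and additionally adding back all shared constraints of~$(A,b)$ verbatim (these involve only~$Y_0$-variables, which survive as global variables) together with domain-bounding constraints for every~$Y_0$-variable.

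\textbf{Third,} prove equivalence. Given a feasible~$x$ for~$(A,b)$, its restriction to~$Y_0$ satisfies all shared constraints; and for each~$i$ the restriction of~$x$ to~$X_i$ is a feasible assignment of the $i$-th original boundaried ILP, so its restriction to~$\partial_G(X_i)$ is a feasible boundary assignment, which by equivalence extends to a feasible assignment of the $i$-th replacement system; assembling these extensions (consistent on the shared boundary variables, which all take the values of~$x$) gives a feasible~$x'$ for~$(A',b')$. Conversely, from a feasible~$x'$ for~$(A',b')$, read off the values on the~$Y_0$-variables; for each~$i$ these values on~$\partial_G(X_i)$ form a feasible boundary assignment of the $i$-th replacement, hence (by equivalence, backwards) of the $i$-th original boundaried ILP, so they extend to an assignment of the~$X_i$-variables satisfying all local constraints of protrusion~$i$. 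Since the~$Y_i \setminus Y_0$ are pairwise disjoint and disjoint from~$Y_0$, and every constraint of~$(A,b)$ is either shared or local to a single~$X_i$, these extensions together with the~$Y_0$-values form a feasible~$x$ for~$(A,b)$. Finally, count: the total number of variables is~$|Y_0| + \sum_i \Oh(d^r) \leq k + k \cdot \Oh(d^r) = \Oh(k \cdot d^r)$; the new constraints number~$\Oh(k \cdot d^r)$ with coefficients in~$\{-d,\ldots,d\}$, and the remaining constraints are original shared constraints involving only~$Y_0$. The running time is dominated by building~$G(A)$ via Proposition~\ref{proposition:build:gaifman:graph} ($\Oh(nm + n^2)$, absorbing~$r^2 m$ since~$r$ is constant; though one must be a little careful as the statement lists only~$\Oh(nm)$-type terms --- here the claimed bound~$\Oh(nm + d^{2r}(n+m) + kmd^r + k^2 d^{2r})$ is what we aim for), by the~$\ell \leq k$ invocations of Lemma~\ref{lemma:tw:boundariedilp:replacement} (each costing~$\Oh(d^r \cdot d^{w+1} w (n_i+m_i) + r^2 d^{2r})$ with~$w \leq r-1$ constant, summing to~$\Oh(d^{2r}(n+m) + k d^{2r})$ after noting~$\sum_i (n_i + m_i) = \Oh(n+m)$), by extracting the~$X_i$-subsystems and their tree decompositions ($\Oh(k \cdot m \cdot d^r)$ or less), and by writing down~$(A',b')$ ($\Oh(k^2 d^{2r})$ for the matrix).

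\textbf{The main obstacle} is the structural claim that every constraint of~$(A,b)$ is confined to a single~$X_i$ or to~$Y_0$ --- this is what makes the ``disjoint union plus shared constraints'' construction sound, and it rests on the interplay between the clique property of constraints (Observation~\ref{observation:rows:clique}), the separation property~$N_G(Y_i) \subseteq Y_0$ of a protrusion decomposition, and the fact that distinct~$Y_i,Y_j$ induce no Gaifman edges between~$Y_i \setminus Y_0$ and~$Y_j \setminus Y_0$. A secondary but genuinely fiddly point is handling protrusions whose boundary has fewer than~$r$ vertices and ensuring the boundary-variable identifications across protrusions are consistent (a vertex~$v \in Y_0$ may lie in~$\partial_G(X_i)$ for several~$i$, and must be identified to one global variable with one domain constraint); also one must double-check that re-adding the original domain-bounding constraints for every variable (both in~$Y_0$ and inside the replacement systems, which Lemma~\ref{lemma:tw:boundariedilp:replacement} already supplies) does not over- or under-constrain, and that the final bound on coefficients~$\{-d,\ldots,d\}$ and on the number of new constraints is met after the merge.
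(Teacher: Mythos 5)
Your proposal follows the paper's proof essentially verbatim: extract each protrusion as a boundaried subsystem, compute a nice Gaifman decomposition of width $r-1$, apply Lemma~\ref{lemma:tw:boundariedilp:replacement} to replace it by an equivalent $\Oh(r\cdot d^r)$-size system, glue the replacements back along~$Y_0$ together with the untouched shared constraints, and verify equivalence in both directions plus the size and time bounds. The one deviation worth noting is that you extract the constraints whose variables all lie in~$X_i$, whereas the paper extracts those involving at least one variable of~$Y_i$ (plus boundary domain constraints); the paper's choice guarantees each original constraint lands in at most one subsystem, which is what makes $\sum_i m_i = \Oh(m)$ true --- a fact your running-time accounting implicitly relies on and which can fail under your rule when a constraint supported entirely inside~$Y_0$ falls within~$N_{G(A)}(Y_i)$ for many~$i$.
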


\begin{proof}
The main idea of the proof is to apply Lemma~\ref{lemma:tw:boundariedilp:replacement} to replace each protrusion in the Gaifman graph by a small subsystem that is equivalent with respect to the boundary variables. For the sake of efficiency, we start by scanning through~$A$ once to compute for each row of~$A$ a list of pointers to the nonzero coefficients in that row. This takes~$\Oh(n \cdot m)$ time. We handle the protrusions~$Y_i$ for~$i \in [\ell]$ (this implies~$i \geq 1$) consecutively, iteratively replacing each protrusion by a small equivalent boundaried ILP to obtain an equivalent instance. 

\emph{Replacing protrusions.} Consider some~$Y_i$ with~$i \geq 1$; we show how to replace the variables~$Y_i$ by a small subsystem that has the same effect on the existence of global solutions. The definition of protrusion decomposition ensures that~$N_{G(A)}(Y_i) \subseteq Y_0$ and that~$G_i := G(A)[Y_i \cup N_{G(A)}(Y_i)]$ has treewidth at most~$r - 1$. From the system~$(A,b)$ we extract the constraints involving at least one variable in~$Y_i$. We collect these constraints, together with domain-enforcing constraints for~$N_{G(A)}(Y_i) \cap Y_0$, into a subsystem~$(A_i,b_i)$. Let~$n_i$ and~$m_i$ be the number of variables and constraints in~$(A_i,b_i)$, respectively. Since~$|N_{G(A)}(Y_i) \cap Y_0| \leq r$ by the definition of a protrusion decomposition, we have~$n_i \leq |Y_i| + r$. Since the nonzero variables involved in a constraint induce a clique in the Gaifman graph, while~$G_i$ has treewidth at most~$r - 1$ and therefore does not have cliques of size more than~$r$, it follows that a constraint involving a variable from~$Y_i$ acts on at most~$r$ variables. We can therefore identify the constraints involving a variable in~$Y_i$ by only inspecting the rows of~$A$ containing at most~$r$ nonzero entries, which implies that the system~$(A_i,b_i)$ can be extracted in~$\Oh(m \cdot r + n_i \cdot m_i) \subseteq \Oh(m + n_i \cdot m_i)$ time.

Let~$\{x_{t_{i,1}}, \ldots, x_{t_{i,r' \leq r}}\}$ be the neighbors of~$Y_i$ in~$G(A)$, i.e., the variables of~$Y_0$ that appear in a common constraint with a variable in~$Y_i$. As~$r$ is a constant, we can compute a tree decomposition~$(T_i, \X_i = \{X_{i,j} \mid j \in V(T_i) \})$ of~$G_i$ of width~$r-1$ with~$\Oh(n_i)$ bags in~$\Oh(n_i)$ time~\cite{Bodlaender96_lintimetw,BodlaenderDDFLP13}. Using Proposition~\ref{proposition:gaifman:decomposition} this yields a nice Gaifman decomposition~$(T'_i,\X'_i,\ZZ'_i)$ of~$A_i$ in~$\Oh(n_i \cdot m_i)$ time. Interpreting~$(A_i,b_i, x_{t_{i,1}}, \ldots, x_{t_{i,r'}})$ as an $r'$-boundaried ILP, we invoke Lemma~\ref{lemma:tw:boundariedilp:replacement} to compute an equivalent $r'$-boundaried ILP~$(A'_i,b'_i,x'_{t'_{i,1}}, \ldots, x'_{t'_{i,r'}})$ in~$\Oh(d^{2r} (n_i + m_i))$ time for constant~$r$. By Lemma~\ref{lemma:tw:boundariedilp:replacement}, the numbers in the system~$(A'_i,b'_i)$ are restricted to the set~$\{-d, \ldots, d\}$ and~$(A'_i,b'_i)$ has~$\Oh(d^{r'}) \subseteq \Oh(d^r)$ variables and constraints. We modify the instance~$(A,b)$ as follows, while preserving the fact that it has domain size~$d$. We remove all variables from~$Y_i$ and all constraints involving them from the system~$(A,b)$. For each non-boundary variable in~$(A'_i,b'_i)$ we add a corresponding new variable to~$(A,b)$. For each constraint in~$(A'_i,b'_i)$, containing some boundary variables and some non-boundary variables, we add a new constraint with the same coefficients and right-hand side to~$(A,b)$. All occurrences of boundary variables~$x'_{t'_{i,j}}$ of~$(A'_i,b'_i)$ are replaced by the corresponding existing variables~$x_{t_{i,j}}$ of~$(A,b)$; occurrences of non-boundary variables are replaced by occurrences of the corresponding newly introduced variables. 

Observe that these replacements preserve the variable set~$Y_0$, and that the newly introduced constraints only involve~$Y_0$ and newly introduced variables. We can therefore perform this replacement step independently for each protrusion~$Y_i$ with~$i \in [\ell]$. Since each variable set~$Y_i$ for~$i \in [\ell]$ is removed and replaced by a new set of~$\Oh(d^{r})$ variables, the final system~$(A',b')$ resulting from these replacements has~$\Oh(|Y_0| + \ell \cdot d^{r})$ variables, which is~$\Oh(k \cdot d^{r})$ since the definition of a $(k,r)$-protrusion decomposition ensures that~$\max(\ell, |Y_0|) \leq k$. When building~$(A',b')$, the procedure above removes from~$(A,b)$ all constraints that involve at least one variable in~$Y_i$ with~$i \geq 1$. Hence the only constraints in~$(A',b')$ are (1) the constraints of~$(A,b)$ that only involve variables in~$Y_0$, and (2) the~$\Oh(\ell \cdot d^{r}) \subseteq \Oh(k \cdot d^{r})$ new constraints that are copied from subsystems~$(A'_i,b'_i)$ for~$i \in [\ell]$. Hence the constraints of the constructed instance satisfy the claims in the theorem statement.

\emph{Running time.} Let us consider the running time of the procedure. For each~$i \in [\ell]$ the time to extract the subsystem~$(A_i,b_i)$ and compute an equivalent $r$-boundaried ILP is dominated by~$\Oh(m + n_i \cdot m_i + d^{2r}(n_i + m_i))$, using the fact that~$r$ is a constant. As we observed earlier,~$n_i \leq |Y_i| + r$. Each constraint in~$(A_i,b_i)$ is either a constraint of~$A$ involving a variable in~$Y_i$, or a domain-enforcing constraint on~$N_{G(A)}(Y_i) \cap Y_0$; there are at most~$2r$ of the latter kind. The constraints of~$A$ involving a variable in~$Y_i$ do not occur in systems~$(A_{i'}, b_{i'})$ for~$i' \neq i$, since this would imply that a variable in~$Y_i$ is adjacent in~$G(A)$ to a variable in~$Y_{i'}$ for~$i \neq i' \neq 0$, contradicting the definition of a protrusion decomposition. Since all but~$2r$ of the constraints of~$(A_i,b_i)$ correspond to constraints of~$A$, which only occur once, it follows that~$\sum _{i=1}^\ell m_i \leq m + \ell \cdot 2r \leq m + n \cdot 2r \leq m + m \cdot 2r \in \Oh(m)$, using the fact that~$\ell \leq n$ by the definition of protrusion decomposition and~$n \leq m$ since~$(A,b)$ contains a domain-enforcing constraint for each variable. Similarly we have~$\sum _{i=1}^\ell n_i \in \Oh(n)$, which shows that~$\sum _{i=1}^\ell (n_i \cdot m_i) \leq (\sum _{i=1}^\ell n_i) \cdot (\sum _{i=1}^\ell m_i) \in \Oh(n \cdot m)$. From this it follows that, summing the total running time of computing replacements for all protrusions~$Y_i$ with~$i \in [r]$, we obtain the bound~$\sum _{i=1}^\ell \Oh(m + n_i \cdot m_i + d^{2r} (n_i + m_i)) \subseteq \Oh(n \cdot m + d^{2r}(n + m))$, using again that~$\ell \cdot m \leq n \cdot m$. After all replacements have been computed, the time to construct the output instance~$(A',b')$ is dominated by the total size of the resulting matrix~$A'$. Let~$n'$ and~$m'$ be the number of variables and constraints in~$A'x' \leq b'$. The argumentation above shows that~$n' \in \Oh(k \cdot d^{r})$. Each constraint in~$A'$ that is not in~$A$ originates from a subsystem~$(A'_i,b'_i)$ which has~$\Oh(d^{r})$ constraints. Hence~$m' \in \Oh(m + \ell \cdot d^{r})$. The number of entries in~$A'$ is therefore~$\Oh((k \cdot d^{r}) \cdot (m + \ell \cdot d^{r}))$. Using the fact that~$\ell \leq k$ by the definition of a protrusion decomposition, this simplifies to~$\Oh(k \cdot m \cdot d^{r} + k^2 \cdot d^{2r})$, which also bounds the time to write down the system~$(A',b')$. (We remark that these bounds can be improved using a sparse matrix representation.) The total running time of the procedure is therefore~$\Oh(n \cdot m + d^{2r}(n + m) + k \cdot m \cdot d^{r} + k^2 \cdot d^{2r})$. Having proven the running time bound, it remains to show that the two instances of \ILPF are equivalent.

\begin{claim}
There is an integer vector~$x$ satisfying~$Ax \leq b$ if and only if there is an integer vector~$x'$ satisfying~$A' x' \leq b'$. 
\end{claim}
\begin{proof}
($\Rightarrow$) In the first direction, assume that~$Ax \leq b$. Then there is a partial assignment~$x_{Y_0}$ of values to the variables~$Y_0$ that can be extended to a feasible solution for~$(A,b)$, as~$x$ is such an extension. Since all variables~$Y_0$ also exist in~$(A',b')$ we can consider~$x_{Y_0}$ as a partial assignment of variables for~$(A',b')$. We prove that this partial assignment~$x_{Y_0}$ can be extended to a feasible assignment for~$(A',b')$. To see this, observe that all constraints in~$(A',b')$ involving only variables of~$Y_0$ also exist in~$(A,b)$, and are therefore satisfied by the partial assignment. All constraints involving at least one variable from~$Y_i$ for~$i \geq 1$ were removed and replaced by constraints from a subsystem~$(A'_i,b'_i)$. Consider a subsystem~$(A'_i,b'_i)$ whose constraints and variables were introduced into~$(A',b')$. Letting~$r' := |N_{G(A)}(Y_i) \cap Y_0|$, Lemma~\ref{lemma:tw:boundariedilp:replacement} guarantees that the $r'$-boundaried ILP~$(A_i,b_i,N_{G(A)}(Y_i) \cap Y_0)$ is equivalent to the $r'$-boundaried ILP~$(A'_i,b'_i,N_{G(A)}(Y_i) \cap Y_0)$. Since~$Ax \leq b$, and~$(A_i, b_i)$ is a subsystem of~$(A,b)$, the partial assignment of~$x_{Y_0}$ to the variables in~$N_{G(A)}(Y_i) \cap Y_0$ can be extended to a feasible solution of~$(A_i,b_i)$. By definition of equivalence, this shows that the partial assignment can be extended to a feasible solution for~$(A'_i,b'_i,N_{G(A)}(Y_i))$, implying that the new variables that were added to~$(A',b')$ originating from~$(A'_i,b'_i)$ can be assigned values from~$\{0, \ldots, d-1\}$ to satisfy the constraints in~$(A'_i,b'_i)$. Since these are the only constraints involving the variables that were added from~$(A'_i,b'_i)$, we can independently assign values to the new variables introduced by the subsystems~$(A'_i,b'_i)$ for~$i \in [\ell]$ to obtain a feasible assignment for~$(A',b')$.

($\Leftarrow$) For the reverse direction, assume that there is an integer vector~$x'$ such that~$A' x' \leq b'$ and consider the corresponding partial assignment~$x'_{Y_0}$ to the variables~$Y_0$. We will prove that~$x'_{Y_0}$ can be extended to a feasible assignment for~$(A,b)$. As~$(A',b')$ contains all constraints of~$(A,b)$ whose variable sets are contained in~$Y_0$, we only have to show that~$x'_{Y_0}$ can be extended to the variables~$Y_1, \ldots, Y_\ell$ while satisfying all constraints involving at least one variable not in~$Y_0$. Observe that for each variable~$x_j \not \in Y_0$, all variables that occur in a common constraint with~$x_j$ are neighbors of~$x_j$ in~$G(A)$. Consequently, if we choose~$i \in [\ell]$ such that~$x_j \in Y_i$, then all variables that occur in a constraint with~$x_j$ are included in the set~$Y_i \cup (N_{G(A)}(Y_i) \cap Y_0)$, which implies that all constraints involving~$x_j$ are included in the subsystem~$(A_i,b_i)$. Note that~$(A',b')$ contains the boundaried ILP~$(A'_i,b'_i,N_{G(A)}(Y_i) \cap Y_0)$ which is equivalent to the boundaried ILP~$(A_i,b_i,N_{G(A)}(Y_i) \cap Y_0)$ that is included in~$(A,b)$. As~$x'_{Y_0}$ can be extended to a feasible assignment for~$(A',b')$, this partial assignment can be extended to a feasible assignment for~$(A'_i,b'_i,N_{G(A)}(Y_i) \cap Y_0)$, which implies by the definition of equivalence that it can also be extended to a feasible assignment for~$(A_i,b_i,N_{G(A)}(Y_i) \cap Y_0)$, which includes all variables in~$Y_i$ and all constraints involving at least one variable of~$Y_i$. Hence for each~$i \in [\ell]$ we can extend~$x'_{Y_0}$ to the variables of~$Y_i$ while satisfying the constraints in~$(A_i,b_i)$. These extensions can be done independently since each variable is present in only one set~$Y_i$. As this shows that~$x'_{Y_0}$ can be extended to the remaining variables of~$(A,b)$ to satisfy all constraints that involve at least one variable not in~$Y_0$, it follows that~$(A,b)$ has a feasible integer solution by the argument given above. This shows that~$(A,b)$ and~$(A',b')$ are equivalent.
\end{proof}

This concludes the proof of Theorem~\ref{theorem:shrink:protrusion:decomposition}.
\end{proof}

\subsection{Limitations for replacing protrusions}\label{section:tw:limitations}

In this section, we discuss limitations regarding the replacement of protrusions in an ILP. First of all, there is an information-theoretic limitation for the worst-case size replacement of any~$r$-boundaried ILP with variables~$x_{t_1},\ldots,x_{t_r}$ each with domain size~$d$. Clearly, there are~$d^r$ different assignments to the boundary. For any set~$A$ of assignments to the boundary variables, using auxiliary variables and constraints one can construct an $r$-boundaried ILP whose feasible boundary assignments are exactly~$A$. This gives a lower bound of~$d^r$ bits for the encoding size of a general $r$-boundaried ILP, since we have~$2^{d^r}$ subsets. Our first result regarding limitations for replacing protrusions is that this lower bound even holds for boundaried ILPs of bounded treewidth.

\begin{proposition} \label{proposition:tw:protrusion:lowerbound}
For any~$d, r \in \mathbb{N}$ and~$A \subseteq \{0,\ldots, d-1\}^r$ there is an $r$-boundaried ILP of treewidth~$3r$ with domain size~$d$, whose feasible boundary assignments are~$A$.
\end{proposition}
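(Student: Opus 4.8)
The plan is to realize $A$ by a straightforward adaptation of the blocking gadget already used in the proof of Lemma~\ref{lemma:tw:boundariedilp:replacement}, and then to bound the treewidth of the resulting Gaifman graph by an explicit tree decomposition. Let $\bar A := \{0,\ldots,d-1\}^r \setminus A$ be the set of boundary assignments we must forbid; the cases $d \le 1$ are degenerate and handled separately (when $d=1$ the boundary is forced to $(0,\ldots,0)$, so $A$ is realized either by no extra constraints or by a single infeasible unary constraint such as $-x_{t_1} \le -1$), so assume $d \ge 2$. I would then introduce the $r$ boundary variables $x_{t_1},\ldots,x_{t_r}$ with domain constraints $0 \le x_{t_i} \le d-1$, and for every forbidden assignment $a^j = (a^j_1,\ldots,a^j_r) \in \bar A$ introduce fresh variables $u^j_1,\ldots,u^j_r$ with domain $\{0,\ldots,d-1\}$ and $v^j_1,\ldots,v^j_r$ with domain $\{0,1\}$, together with the constraints
\[ x_{t_i} = a^j_i + u^j_i - d\, v^j_i \ \ (i \in [r]), \qquad \sum_{i=1}^r u^j_i \ge 1. \]
All variables have domain size $d$ (here $d \ge 2$ is used for the $v^j_i$), and every coefficient and right-hand side lies in $\{-d,\ldots,d\}$.

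Next I would verify that the feasible boundary assignments are exactly $A$, by the same reasoning as in Lemma~\ref{lemma:tw:boundariedilp:replacement}. If the boundary is set to some $a^j \in \bar A$, then in the $j$-th gadget $u^j_i - d\, v^j_i = 0$ together with the domains of $u^j_i,v^j_i$ forces $u^j_i = v^j_i = 0$ for all $i$, violating $\sum_i u^j_i \ge 1$; hence no forbidden assignment has a feasible extension. Conversely, if the boundary is set to some $a \in A$ (which lies in $\{0,\ldots,d-1\}^r$ by the domain constraints), then for each gadget $j$ we have $a \ne a^j$, and one solves $a_i - a^j_i = u^j_i - d\, v^j_i$ coordinatewise within the prescribed domains (take $v^j_i=0,\ u^j_i = a_i - a^j_i$ if $a_i \ge a^j_i$, and $v^j_i = 1,\ u^j_i = a_i - a^j_i + d$ otherwise); at any coordinate where $a_i \ne a^j_i$ the resulting $u^j_i$ is at least $1$, so $\sum_i u^j_i \ge 1$. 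Since distinct gadgets share only the boundary variables, these choices are made independently, giving a feasible extension. Thus the feasible boundary assignments are precisely $A$.

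Finally I would bound the treewidth. Every constraint involves only variables of a single gadget: the equality for coordinate $i$ in gadget $j$ spans $\{x_{t_i},u^j_i,v^j_i\}$, the sum-constraint of gadget $j$ spans $\{u^j_1,\ldots,u^j_r\}$, and the domain constraints are unary. Hence every edge of the Gaifman graph lies inside one of the sets $W_j := \{x_{t_1},\ldots,x_{t_r}\} \cup \{u^j_i,v^j_i : i \in [r]\}$, which has $3r$ vertices. I would then take the tree decomposition whose root bag is $\{x_{t_1},\ldots,x_{t_r}\}$ with one child bag $W_j$ per forbidden assignment: this covers all vertices and edges, every gadget variable occurs in exactly one bag, and every boundary variable occurs in the root and all child bags, which form a connected subtree, so the connectivity condition holds. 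The maximum bag size is $3r$, giving $\tw(G) \le 3r-1 \le 3r$. The only step requiring any care is this treewidth bound — choosing a decomposition that localizes each gadget's interaction with the rest to the common boundary — since the correctness of the individual blocking gadget is identical to that already established in Lemma~\ref{lemma:tw:boundariedilp:replacement}; I note in passing that a slightly more careful per-gadget decomposition attains width $2r-1$, but the stated bound $3r$ is all that is needed.
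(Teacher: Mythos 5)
Your proposal is correct and follows essentially the same route as the paper, which simply observes that the blocking construction of Lemma~\ref{lemma:tw:boundariedilp:replacement} (applied to any ILP realizing~$A$) yields a boundaried ILP of treewidth at most~$3r$. In fact your write-up is more explicit than the paper's remark: you instantiate the gadget directly from the forbidden set $\bar A$, handle the degenerate case $d\le 1$, and supply the star-shaped tree decomposition with bags $W_j$ that the paper leaves implicit when asserting the width bound.
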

%\begin{proof}
%Our used encoding in Lemma~\ref{lemma:tw:boundariedilp:replacement} uses size~$\tilde\Oh(d^{2r})$ and it can be checked that the returned boundaried ILP has treewidth at most~$3r$, 
%\end{proof}

The proposition follows from the fact that the encoding in Lemma~\ref{lemma:tw:boundariedilp:replacement} produces a boundaried ILP of treewidth at most~$3r$. To find an $r$-boundaried ILP of small treewidth whose feasible assignments are~$A$, we may therefore first construct an arbitrary $r$-boundaried ILP whose feasible boundary assignments are~$A$, and then invoke Lemma~\ref{lemma:tw:boundariedilp:replacement}. Our used encoding in Lemma~\ref{lemma:tw:boundariedilp:replacement} uses size~$\tilde\Oh(d^{2r})$. Note that, when using an encoding for sparse matrices, our replacement size comes fairly close to the information-theoretic lower bound.

Second, the lower bounds for $0/1$-\ILPF{}$(n)$, which follow, e.g., from lower bounds for \problem{Hitting Set} parameterized by ground set size, imply that there is no hope for a kernelization just in terms of deletion distance to a system of bounded treewidth. (This distance is upper bounded by $n$.) Note that the bound relies on a fairly direct formulation of \problem{Hitting Set} instances as ILPs, which creates huge cliques in the Gaifman graph when expressing sets as large inequalities (over indicator variables). The lower bound can be strengthened somewhat by instead representing sets less directly: For each set, ``compute'' the sum of its indicator variables using auxiliary variables for partial sums. Similarly to the example of \problem{Subset Sum}, this creates a structure of bounded treewidth. Note, however, that this structure is not a (useful) protrusion because its boundary can be as large as $n$; this is indeed the crux of having only a modulator to bounded treewidth but no guarantee for (or means of proving of) the existence of protrusions with small boundaries.

Finally, we prove in the following theorem that the mentioned information-theoretic limitation also affects the possibility of strong preprocessing, rather than being an artifact of the definition of equivalent boundaried ILPs. In other words, there is a family of \ILPF instances that already come with a protrusion decomposition, and with a single variable of large domain, but that cannot be reduced to size polynomial in the parameters of this decomposition. Note that this includes all other ways of handling these instances, which establishes that protrusions with even a single large domain boundary variable can be the crucial obstruction from achieving a polynomial kernelization.

\begin{theorem} \label{theorem:lb:largedomainvar}
Assuming \ncontainment, there is no polynomial-time algorithm that compresses instances~$(A \in \Z^{m \times n},b\in\Z^m)$ of \ILPF with entries in~$\{-n, \ldots, n\}$ that consist of $\{0,1\}$-variables except for a single variable of domain~$d \leq n$, which are given together with a $(k,5)$-protrusion decomposition~$Y_0 \dotcup Y_1 \dotcup \ldots \dotcup Y_\ell$ of~$V(G(A))$, to size polynomial in~$k + \hat{m} + \log d$, where~$\hat{m}$ is the number of constraints that affect only variables of~$Y_0$.
\end{theorem}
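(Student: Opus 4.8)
The plan is to prove the lower bound via a cross-composition (Theorem~\ref{theorem:crosscomp:nokernel}), composing $t$ instances of a suitable \NP-hard problem into a single \ILPF instance of the stated restricted form whose parameter $k+\hat{m}+\log d$ is polylogarithmic in $t$. The natural source problem is an \NP-hard problem whose solution space can be indexed by a single bounded number: I would pick \problem{Subset Sum} (or a padded variant thereof), so that each of the $t$ input instances $x_1,\ldots,x_t$ (made equivalent by a polynomial equivalence relation that groups instances by number of items and bit-length of the target) is encoded as a single target value to be ``hit'' by a subset sum. The key trick is that the large-domain variable, call it $z$, with domain $d\approx t$, will act as a \emph{selector}: its value $i\in\{1,\ldots,t\}$ chooses which of the $t$ composed instances we are trying to satisfy. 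Because $d\le n$ and $\log d = \Theta(\log t)$ is tiny, the selector contributes almost nothing to the parameter.

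The construction I have in mind: create one protrusion $Y_i$ per input instance $x_i$, each containing the $0/1$-variables encoding a candidate subset for $x_i$ together with the auxiliary partial-sum variables (exactly as in the proof of Theorem~\ref{theorem:unboundeddomain:boundedtw:hardness}), so each $Y_i$ has bounded treewidth; in fact by routing the partial sums appropriately and letting only a constant number of ``summary'' variables touch the outside, each $Y_i\cup N(Y_i)$ is a $5$-protrusion. The shared part $Y_0$ consists of the selector $z$, plus a constant number of gateway variables per protrusion (the boundary), all $0/1$ except $z$. The crucial constraints in $Y_0$ must (a) force consistency between $z=i$ and the ``this instance is satisfied'' flag of protrusion $i$, and (b) force that \emph{some} instance is satisfied. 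The standard way to make $z=i$ selectively activate the $i$-th subsystem with a bounded number of $Y_0$-constraints is to use big-$M$-style constraints with coefficients bounded by $n$: e.g.\ a constraint of the form $z - i \le n\cdot(1-f_i)$ and $i - z \le n\cdot(1-f_i)$ together with a single constraint $\sum_i f_i \ge 1$ and $\sum_i f_i \le 1$, where $f_i\in\{0,1\}$ is the activation flag lying in $Y_0$. But this naive version already needs $\Theta(t)$ constraints in $Y_0$, blowing up $\hat{m}$. To keep $\hat{m}$ polylogarithmic, I would instead communicate the selector value to the protrusions in \emph{binary}: write $z=\sum_{j} 2^j z_j$ with $\Theta(\log t)$ bits $z_j\in\{0,1\}$ in $Y_0$, pass all of these bits into every protrusion boundary, and let each protrusion $Y_i$ internally check (using its own, local, constraints — which do \emph{not} count toward $\hat{m}$) whether the incoming bit-string equals the binary encoding of $i$; if not, the protrusion's constraints are trivially satisfiable (by zeroing its subset variables), and if so, they are satisfiable iff $x_i$ is a \yes-instance. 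Then $Y_0$ needs only the $\Theta(\log t)$ domain constraints tying the $z_j$ to $z$, plus a handful of feasibility-flag constraints, so $\hat m = \Theta(\log t)$ and $|Y_0| = \Theta(\log t)$; since $\ell = t$ we need to also argue $k$, but $k = \max(\ell,|Y_0|) = t$ is too large! So the selector-in-binary idea must be pushed further: group the $t$ instances into $\Theta(\log t)$ \emph{super-protrusions} is not possible directly, so instead I would use the fact that the \emph{number} of protrusions $\ell$ is allowed to be as large as $k$, and make $k := t$ — wait, that makes $k$ polynomial in $t$, not polylog.

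The resolution, and the genuine heart of the argument, is that the parameter we must bound is $k+\hat{m}+\log d$, and $k$ is the protrusion-decomposition parameter $\max(\ell,|Y_0|)$; to get $k$ polylogarithmic we cannot have $t$ separate protrusions. Instead I would merge all $t$ instances into a \emph{single} protrusion $Y_1$ of bounded treewidth: inside $Y_1$ we keep all $t$ subsystems, but crucially the protrusion's Gaifman graph must still have treewidth $\le 4$, which is fine because the $t$ Subset-Sum gadgets are vertex-disjoint paths, connected only through the $O(\log t)$ selector bits sitting on the boundary (a width-$5$ path/tree decomposition threads through all of them, introducing and forgetting each gadget's path while always keeping the $O(\log t)$ boundary bits resident — but that makes the bags size $\Theta(\log t)$, not constant!). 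This is the main obstacle: keeping the protrusion boundary \emph{and} treewidth constant while still letting a single $\log t$-bit selector reach into a structure encoding all $t$ instances. I expect the correct fix is to not pass the selector bits in parallel but to \emph{chain} the instances: arrange the $t$ gadgets in a path, threading a constant number of ``running selector'' variables through consecutive gadgets, where at gadget $i$ we compare a running counter to a hard-wired constant and decrement, activating exactly one gadget; a constant number of variables suffices on each interface, the treewidth stays constant, the boundary of the whole chain is constant, $\hat m$ is $O(1)$, $|Y_0| = O(1)$, $\ell = 1$, and $\log d = O(\log t)$. Then $k+\hat m+\log d = O(\log t)$, and by Theorem~\ref{theorem:crosscomp:nokernel} the claimed compression is impossible unless \containment. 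I would then verify the routine points: the chosen source problem is \NP-hard; the equivalence relation is polynomial; the composed instance has entries in $\{-n,\ldots,n\}$ (the Subset-Sum numbers may be large, but we can pad/scale so that target values are $O(n)$, or equivalently do the arithmetic bit-by-bit as in the remark after Theorem~\ref{theorem:shrink:protrusion:decomposition}); and the produced $(k,5)$-protrusion decomposition genuinely satisfies Definition requirements. The delicate bookkeeping is entirely in the chaining gadget and in confirming its treewidth is at most $4$ and its boundary at most $5$.
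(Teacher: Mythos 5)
Your proposal has a genuine gap, and it is fatal to the final design you settle on. The chained single-protrusion construction asks a $5$-protrusion over $\{0,1\}$-variables (treewidth $\le 4$, boundary $\le 5$, constant-size interfaces between consecutive gadgets) to carry the entire hardness of $t$ \NP-hard instances. It cannot: by Theorem~\ref{theorem:boundeddomain:boundedtreewidth:dp} the feasibility of such a subsystem is decidable in polynomial time, and by Lemma~\ref{lemma:tw:boundariedilp:replacement}/Theorem~\ref{theorem:shrink:protrusion:decomposition} it can be replaced in polynomial time by an equivalent system of size $\Oh(r\cdot 2^{r})=\Oh(1)$. So if your construction really computed the OR with $k+\hat m+\log d=\Oh(\log t)$ and $d=2$, composing it with Theorem~\ref{theorem:shrink:protrusion:decomposition} would yield an unconditional distillation of an \NP-hard problem, i.e., a proof of \containment. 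The concrete place this surfaces is the Subset Sum gadget: the partial-sum variables need domain exponential in the bit-length of the numbers (not allowed -- every variable except one of domain $d\le n$ must be binary), and doing the arithmetic ``bit-by-bit'' forces bags containing all $\Theta(B)$ bits of a running sum, so the treewidth becomes the bit-length $B$ rather than a constant; scaling the numbers to $\Oh(n)$ makes Subset Sum polynomial-time solvable and hence unusable as a cross-composition source. Likewise, your ``running selector'' counters either need domain $t$ at every interface (violating the single-large-domain-variable restriction) or, if binary and of constant width, reduce the chain to a constant-state automaton that Lemma~\ref{lemma:tw:boundariedilp:replacement} compresses away.

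You also over-constrain yourself at the point where you reject $\ell=t$: cross-composition only requires the parameter to be polynomial in $\max_i|x_i|+\log t$, not polylogarithmic in $t$, so polynomially many protrusions (in the instance size $n$) are perfectly fine. That is exactly what the paper exploits. It cross-composes from \IS: $Y_0$ holds $n$ selection variables, $\binom{n}{2}$ edge-indicator variables $y_{i,j}$, the single selector $s$ of domain $t$, and the $\Oh(n^2)$ constraints verifying an independent set against the $y_{i,j}$; for each pair $\{i,j\}$ there is one protrusion with boundary $\{s,y_{i,j}\}$ (size two, but with $s$ of domain $t$ in it) whose $\Oh(t)$ internal binary variables force $y_{i,j}$ to equal the adjacency bit of $\{i,j\}$ in $G_s$. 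The hard data -- the $t$ edge sets -- thus sits inside protrusions whose boundary behavior space has size $2^{2t}$ precisely because the large-domain variable lies on the boundary; this is the mechanism your construction is missing, and it is the whole point of the theorem.
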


\begin{proof}
We prove the theorem by giving a cross-composition from \IS to an appropriate instance of \ILPF. We begin by picking a polynomial equivalence relation \eqvr on instances~$(G,k)$ of \IS. (Recall that such an instance asks whether~$G$ contains an independent set of size at least~$k$.) We let all malformed instances be equivalent, including those where~$k>|V(G)|$. We let any two well-formed instances~$(G_1,k_1)$ and $(G_2,k_2)$ be equivalent if~$|V(G_1)|=|V(G_2)|$ and~$k_1=k_2$, i.e., they have the same number of vertices and ask for the same size of independent set. Clearly, equivalence can be checked in polynomial time and there are at most~$1+|V(G)|^2\leq 1+N^2$ equivalence classes on any finite set of instances of size at most~$N$ each.

Now, given~$t$ \eqvr-equivalent instances of \IS, say,~$(G_1,k),\ldots, \linebreak[1] (G_t,k)$ where each~$G_i$ has exactly~$n$ vertices, we construct an ILP with a \emph{core part} consisting of~$n^{\Oh(1)}$ variables and constraints plus a set of roughly~$n^2$~$4$-protrusions. (Should the instances be malformed then the correct answer is \no and we may return any constant-size infeasible ILP.) Our construction rules out any form of kernel or compression for the target problem to size polynomial in~$(n+\log t)$ (unless \containment). % A protrusion reduction rule, however, would directly give such a reduction to size~$n^{\Oh(1)}$.

\emph{Construction.} The core part of the \ILPF instance is formed by a straightforward set of constraints that checks whether a set of~$n$ variables encodes an independent set of size at least~$k$, subject to a set of edges that itself is given by variables. For ease of presentation, assume that all input graphs have the same vertex set~$V(G_i)=\{1,\ldots,n\}$ and (different) edge sets~$E(G_i)\subseteq\binom{V}{2}$.
\begin{itemize}
 \item We introduce~$n$ variables~$x_1,\ldots,x_n$ that will encode an independent set; we constrain them by enforcing~$x_i\in\{0,1\}$. Accordingly, since all~$t$ instances seek an independent set of size at least~$k$ we add a constraint
 \begin{align}
  \sum_{i=1}^n x_i\geq k.\label{constraint:crosscomp:totalsize}
 \end{align}
 \item We add~$\binom{n}{2}$ variables~$y_{i,j}$, for all~$1\leq i<j\leq n$ that will contain the adjacency information of a single input graph; we also enforce~$y_{i,j}\in\{0,1\}$. For now we only add constraints that enforce that the independent set encoded in~$x_1,\ldots,x_n$ is consistent with the information in the~$y$-variables. For all~$1\leq i<j\leq n$ we enforce
 \begin{align}
  x_i+x_j+y_{i,j}\leq 2.\label{constraint:crosscomp:edges}
 \end{align}
 \item We add a single variable~$s$ that effectively chooses one input graph. We enforce~$s\in\{1,\ldots,t\}$. This variable is used in the boundary of each protrusion.
\end{itemize}
Now we describe the protrusion part of the instance. For each~$1\leq i<j\leq n$ we add additional variables and constraints whose corresponding graph has treewidth four and such that we retrieve the edge information for graph~$G_s$. In the following, we explain how to build the constraints that encode the edge information for edge~$\{i,j\}$ for all graphs~$G_1,\ldots,G_t$ for some fixed choice of~$i$ and~$j$.
\begin{itemize}
 \item We begin with adding indicator variables~$d^{i,j}_p$ for~$p\in\{1,\ldots,t\}$ with the intention of enforcing
 \begin{align*}
  d^{i,j}_p=\begin{cases}
       0 & \mbox{if~$s=p$,}\\
       1 & \mbox{if~$s\neq p$.}
      \end{cases}
 \end{align*}
 We restrict these variables to domain~$\{0,1\}$.
 Once we have these available it will be straightforward to enforce that~$y_{i,j}$ correctly represents whether~$\{i,j\}\in E(G_s)$, i.e., whether the graph~$G_s$ chosen by~$s\in\{1,\ldots,t\}$ contains the edge~$\{i,j\}$.
 \item As a first step, we add constraints enforcing that~$d_p=1$ for all~$p$ with~$s\neq p$. To this end we add the following constraints for all~$p\in\{1,\ldots,t\}$.
 \begin{align*}
  s&\geq p-t\cdot d^{i,j}_p\\
  s&\leq p+t\cdot d^{i,j}_p
 \end{align*}
 Observe that~$p$ and~$t$ are constants in these constraints. Clearly, if~$d^{i,j}_p=0$ then we must have~$s=p$, so this can occur only for that particular variable~$d^{i,j}_p=d_s$. Unfortunately, this alone does not suffice, since it does not prevent setting \emph{all} variables~$d^{i,j}_p$ to one; we fix this in the next step.
 \item As a second step, we add constraints that are equivalent to enforcing~$\sum_p d^{i,j}_p=t-1$. (Note that we cannot outright add this constraint as its treewidth would be huge.) To this end, we use additional variables~$c^{i,j}_1,\ldots,c^{i,j}_t$ with domain~$\{0,1\}$. The constraints are as follows.
 \begin{align*}
  c^{i,j}_1&=d^{i,j}_1\\
  c^{i,j}_i&=c^{i,j}_{\ell-1}+d^{i,j}_\ell-1\quad\forall \ell\in\{2,\ldots,t\}\\
  c^{i,j}_t&=0
 \end{align*}
 Adding up all equations except for~$c^{i,j}_t=0$ and subtracting~$c^{i,j}_1+\ldots+c^{i,j}_{t-1}$ from both sides yields~$c^{i,j}_t=\sum_p d^{i,j}_p-(t-1)$. Thus, combined with~$c^{i,j}_t=0$, this enforces~$\sum_p d^{i,j}_p=t-1$.
 
 We observe that both sets of constraints together enforce the desired behavior for all variables~$d^{i,j}_p$. Since~$s\in\{1,\ldots,t\}$, we have~$t-1$ choices of~$p$ with~$s\neq p$. For the corresponding variables~$d^{i,j}_p$ we enforced that~$d^{i,j}_p=1$. Due to the constraint~$\sum_p d^{i,j}_p = t-1$ we must therefore have~$d^{i,j}_s=0$.
 \item Now we may use the~$d^{i,j}_p$ variables to enforce that the presence of edge~$\{i,j\}$ in graph~$G_s$ is correctly stored in~$y_{i,j}$. It suffices to add the following constraints (i.e., one constraint is added for each~$p\in\{1,\ldots t\}$ with the choice depending on whether~$\{i,j\}\in E(G_p)$).
 \begin{align*}
 \begin{cases}
  y_{i,j}\geq 1-d^{i,j}_p & \mbox{if~$\{i,j\}\in E(G_p)$,}\\
  y_{i,j}\leq 0+d^{i,j}_p & \mbox{if~$\{i,j\}\notin E(G_p)$.}
 \end{cases}
 \end{align*}
 Recall that~$y_{i,j}$ has domain~$\{0,1\}$ and thus whenever~$d^{i,j}_p=1$ there is no additional restriction for~$y_{i,j}$. In the (unique) case that~$d^{i,j}_p=0$ (together with~$y_{i,j}\in\{0,1\}$) this clearly enforces
 \begin{align*}
  y_{i,j}=\begin{cases}
           1 & \mbox{if~$\{i,j\}\in E(G_p)$,}\\
           0 & \mbox{if~$\{i,j\}\notin E(G_p)$.}
          \end{cases}
 \end{align*}
 Since we ensured that~$d^{i,j}_p=0$ if and only if~$s=p$ this correctly enforces that~$y_{i,j}$ carries the information about presence of~$\{i,j\}$ in~$G_s$, as desired.
 \item Finally, let us check that the added variables together with~$y_{i,j}$ and~$s$ indeed correspond to a protrusion in the corresponding Gaifman graph. Clearly, only~$s$ and~$y_{i,j}$ occur in any further constraints so the boundary size is two. It can be checked that the treewidth of the subgraph induced by vertices of the present variables is at most four. (Key: Make a path decomposition of bags~$\{s,y_{i,j},c^{i,j}_{p-1},c^{i,j}_p,d^{i,j}_p\}$ for increasing~$p$.)
\end{itemize}
This completes our construction. Since we claim that the \ILPF problem does not even admit a polynomial kernelization or compression when a protrusion decomposition is given along with the input, we have to construct a protrusion decomposition as part of the cross-composition. It is defined as follows. The set~$Y_0$ has size~$n + \binom{n}{2}+1$ and consists of the variables~$x_i$ for~$i \in [n]$, variables~$y_{i,j}$ for~$1 \leq i < j \leq n$, and the variable~$s$. For each choice of~$1 \leq i < j < n$ we create a set~$Y_{\ldots}$ in the protrusion decomposition that contains the variables for the protrusion involving~$s$ and~$y_{i,j}$ that was described above. The neighborhood of each such set~$Y_{\ldots}$ in~$Y_0$ consists of~$s$ together with~$y_{i,j}$ and therefore has size at most two. As argued above, the treewidth of the graph induced by the protrusion and its neighborhood is at most four. We need~$\ell = \binom{n}{2}$ different sets in the protrusion decomposition. It is therefore a~$(k',5)$-protrusion decomposition for~$k' = \max(|Y_0|, \ell) \in \Oh(n^2)$. To bound the number~$m'$ of constraints that affect only the variables of~$Y_0$, observe that the only constraints whose support is a subset of~$Y_0$ are the single constraint~\cref{constraint:crosscomp:totalsize}, the~$\binom{n}{2}$ constraints~\cref{constraint:crosscomp:edges}, and the~$\Oh(|Y_0|)$ domain-enforcing constraints on~$Y_0$. We therefore have~$m' \in \Oh(n^2)$. Since the only variable that does not have a binary domain is~$s$ with domain~$\{1, \ldots, t\}$, we have~$d := t$. For the total parameter value of the constructed instance we therefore have~$k' + m' + d \in \Oh(n^2 + \log t)$, which is polynomial in the size of the largest input instance plus~$\log t$. It is therefore suitably bounded for a cross-composition. To complete the lower bound it suffices to prove that the constructed instance acts as the logical OR of the inputs.

\emph{Correctness.} We will keep this brief since we already discussed the workings of the created instance. If at least one graph~$G_{s^*}$ has an independent set of size at least~$k$, then set the indicators~$x_i$ accordingly and set~$s=s^*$. Furthermore, set the edge indicators~$y_{i,j}$ according to~$E(G_{s^*})$. It remains to verify that there are feasible values for~$c^{i,j}_p,d^{i,j}_p$ for all~$1 \leq i < j \leq n$ and~$1 \leq p \leq t$. It can be verified that the following values are feasible (taking into account~$s=s^*$ and the value of~$y_{i,j}$ variables).
\begin{align*}
c^{i,j}_p=\begin{cases}
    1 & \mbox{if~$p<s$,}\\
    0 & \mbox{if~$p\geq s$.}
    \end{cases}&&d^{i,j}_p=\begin{cases}
    1 & \mbox{if~$p\neq s$,}\\
    0 & \mbox{if~$p=s$.}
    \end{cases}
\end{align*}
Conversely, we already discussed that for any choice of~$s$ the variables~$y_{i,j}$ correspond to the presence of edge~$\{i,j\}$ in~$G_s$. Thus, the remaining constraints correctly verify the presence of an independent set of size at least~$k$ in~$G_s$. This implies that~$(G_s,k)$ is \yes. This completes the proof of Theorem~\ref{theorem:lb:largedomainvar}.
\end{proof}

Intuitively, the parameterization chosen in the theorem implies that everything can be bounded to size polynomial in the parameters except for the variables in~$Y_1,\ldots,Y_\ell$ and the (encoding size of the) constraints that are fully contained in protrusions (recall that constraints give cliques in~$G(A)$).
To put this lower bound into context, we prove that a more general (and less technical) variant is fixed-parameter tractable.

\begin{theorem}\label{theorem:ilpfvariant:fewunbounded:fpt}
The following variant of \ILPF is \FPT: We allow a constant number~$c$ of variables with polynomially bounded domain; all other variables have domain size~$d$. Furthermore, there is a specified set of variables~$S\subseteq\{x_1,\ldots,x_n\}$ such that the graph~$G(A)-S$ has bounded treewidth. The parameter is~$d+|S|$.
\end{theorem}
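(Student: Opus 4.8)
The plan is a standard win/win: brute-force the few ``hard'' variables and then invoke the bounded-treewidth, bounded-domain dynamic program of Theorem~\ref{theorem:boundeddomain:boundedtreewidth:dp} on what is left. Let~$P\subseteq\{x_1,\ldots,x_n\}$ be the set of at most~$c$ variables whose domain is polynomially bounded, say by a fixed polynomial~$q$ of the input size, and let~$t$ be the constant with~$\tw(G(A)-S)\le t$. First I would enumerate all assignments of integer values to the variables in~$S\cup P$. Each variable of~$S\setminus P$ has domain size at most~$d$, and~$|P|\le c$ is a fixed constant with each variable ranging over at most~$q(\cdot)$ values, so the number of such partial assignments is at most~$d^{|S|}\cdot q(\cdot)^c = f(d,|S|)\cdot n^{\Oh(1)}$.

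For each enumerated partial assignment~$\sigma$, substitute the chosen values into~$(A,b)$: a constraint~$\sum_i A[r,i]x_i\le b[r]$ becomes~$\sum_{x_i\notin S\cup P} A[r,i]x_i \le b[r]-\sum_{x_i\in S\cup P} A[r,i]\sigma(x_i)$. If some constraint becomes a contradiction, this branch is infeasible; otherwise discard the constraints that became trivially true and call the resulting system~$(A^\sigma,b^\sigma)$. Its variable set is~$V:=\{x_1,\ldots,x_n\}\setminus(S\cup P)$, all of whose variables have domain size~$d$, and the domain-enforcing constraints for these variables are untouched, so~$(A^\sigma,b^\sigma)$ still has domain size~$d$. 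Since substituting values does not alter the nonzero coefficient pattern among the variables of~$V$, the Gaifman graph~$G(A^\sigma)$ equals the induced subgraph~$G(A)[V]=G(A)-(S\cup P)$, whose treewidth is at most~$\tw(G(A)-S)\le t$; in particular this graph does not depend on~$\sigma$.

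Hence I would construct~$G(A)-(S\cup P)$ in polynomial time and compute, once, a tree decomposition of it of width~$\Oh(t)$ using the approximation algorithm of Bodlaender et al.~\cite{BodlaenderDDFLP13} (or width exactly~$t$ via Bodlaender's algorithm). For each branch~$\sigma$, turn this decomposition into a nice Gaifman decomposition of~$A^\sigma$ via Proposition~\ref{proposition:gaifman:decomposition}, and apply Theorem~\ref{theorem:boundeddomain:boundedtreewidth:dp} to test feasibility of~$A^\sigma x\le b^\sigma$ in time~$d^{\Oh(t)}\cdot n^{\Oh(1)}$. Output \yes iff some branch is feasible. Correctness is immediate: a feasible~$x$ for~$(A,b)$ restricts on~$S\cup P$ to one of the enumerated assignments~$\sigma$ and on~$V$ to a feasible solution of~$(A^\sigma,b^\sigma)$, and conversely any feasible solution of some~$(A^\sigma,b^\sigma)$ extends the corresponding~$\sigma$ to a feasible~$x$ for~$(A,b)$. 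The total running time is (number of branches)~$\times$~(polynomial)~$\times~d^{\Oh(t)}\cdot n^{\Oh(1)} = f(d,|S|)\cdot n^{\Oh(1)}$ with~$c$ and~$t$ fixed constants, so the procedure is \FPT for the parameter~$d+|S|$.

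The only two points that need a word of justification, rather than being routine, are: (i) the at most~$c$ polynomial-domain variables may simply be brute-forced, because a polynomial raised to a constant power is still a polynomial --- this is what keeps the branching factor \FPT even though these variables are \emph{not} bounded by the parameter; and (ii) deleting~$S\cup P$ cannot increase the treewidth of the Gaifman graph, so after substitution we genuinely have a bounded-treewidth, bounded-domain instance of \ILPF to which Theorem~\ref{theorem:boundeddomain:boundedtreewidth:dp} applies. Both are one-line arguments, so there is no real obstacle --- consistent with the text's remark that this variant is the ``less technical'' counterpoint to Theorem~\ref{theorem:lb:largedomainvar}.
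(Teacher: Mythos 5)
Your proposal is correct and follows essentially the same route as the paper: brute-force the constant number~$c$ of polynomial-domain variables (polynomially many branches, since a polynomial to a constant power is a polynomial) and then invoke Theorem~\ref{theorem:boundeddomain:boundedtreewidth:dp} on the bounded-domain remainder. The only (harmless) deviation is in how~$S$ is handled: you additionally branch over the~$d^{|S|}$ assignments to~$S$ and run the dynamic program on the constant-treewidth graph~$G(A)-(S\cup P)$, whereas the paper keeps~$S$ in the system and observes that~$G(A)$ itself has treewidth at most~$|S|+\Oh(1)$, running the DP once per branch with~$w\leq|S|+\Oh(1)$; both yield the same~$d^{|S|+\Oh(1)}\cdot n^{\Oh(1)}$-type bound.
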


\begin{proof}
This follows readily from Theorem~\ref{theorem:boundeddomain:boundedtreewidth:dp} once we take care of the at most~$c$ high-domain variables. To do so, we can simply branch over all possible assignments to the variables since the total number of choices of~$c$ values from a polynomially bounded domain is itself polynomially bounded. Thus, for each choice we replace the high-domain variables by the chosen values and (after rearranging) obtain an ILP on slightly fewer variables that all have bounded domain. Now, let us note that the treewidth of~$G(A)$ is initially bounded by~$|S|+\Oh(1)$ and that the Gaifman graph for the new ILP has at most the same treewidth (effectively it is obtained by deleting the vertices corresponding to high-domain variables). Thus, we may run the algorithm from Theorem~\ref{theorem:boundeddomain:boundedtreewidth:dp} with~$w\leq|S|+\Oh(1)$ to obtain the claimed result.
\end{proof}

\section{Totally unimodular subproblems}\label{section:tum}

Recall that a matrix~$A$ is \emph{totally unimodular} (TU) if and only if each square submatrix of~$A$ has determinant in~$\{-1,0,1\}$; this requires that~$A\in\{-1,0,1\}^{m\times n}$ since any single entry defines a one by one square submatrix. If an ILP is given by~$Ax\leq b$ where~$A$ is totally unimodular and~$b$ is integral, then all extremal points of the corresponding polyhedron are integral. Thus, solving the relaxed~$LP$ suffices for feasibility and even for optimizing any function~$c^Tx$ subject to~$Ax\leq b$.

We say that a matrix~$A$ is \emph{totally unimodular plus~$p$ entries} if~$A$ can be obtained from a totally unimodular matrix by replacing any~$p$ entries by new values. (This is more restrictive than the equally natural definition of adding~$p$ arbitrary rows or columns.) We note that \ILPF is \FPT with respect to parameter $p+d$, where $d$ bounds the domain:
It suffices to try all $d^p$ assignments for variables whose column in $A$ has at least one modified entry. After simplification the obtained system is TU and, thus, existence of a feasible assignment for the remaining entries can be tested in polynomial time.

Our following result shows that, despite fixed-parameter tractability for parameter~$p + d$, the existence of a polynomial kernelization is unlikely; this holds already when~$d=2$.

\begin{theorem}\label{theorem:tulb}
\problem{ILP Feasibility} restricted to instances~$(A,b,p)$ where~$A$ is totally unimodular plus~$p$ entries does not admit a kernel or compression to size polynomial in~$p$ unless \containment, even if all domains are~$\{0,1\}$.
\end{theorem}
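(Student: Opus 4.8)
\section*{Proof proposal for Theorem~\ref{theorem:tulb}}

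The plan is to prove the lower bound by an OR-cross-composition (via Theorem~\ref{theorem:crosscomp:nokernel}) from the Karp-\NP-hard problem \IS, in the style of Theorem~\ref{theorem:lb:largedomainvar}, but with the construction engineered so that all but $\Oh(n^2)$ entries of the constraint matrix are totally unimodular. I let \eqvr place two \IS instances in the same class iff they ask for an independent set of the same size~$k$ in graphs on a common vertex set~$[n]$ (and all malformed instances in one class); this is a polynomial equivalence relation with~$\Oh(N^2)$ classes. Given~$t$ equivalent instances~$(G_1,k),\dots,(G_t,k)$, write~$c^{i,j}_p := [\{i,j\}\in E(G_p)]$ and build a single $0/1$-ILP with three parts: \emph{(i) a core} --- binary variables~$x_1,\dots,x_n$ and~$y_{i,j}$ ($1\le i<j\le n$), the $\binom{n}{2}$ constraints~$x_i+x_j+y_{i,j}\le 2$ (so~$\{i:x_i=1\}$ is independent in the graph with edge set~$\{(i,j):y_{i,j}=1\}$), and~$\sum_i x_i\ge k$; \emph{(ii) a selection subsystem} --- variables~$e_1,\dots,e_t\ge 0$ with a prefix-sum chain~$E_0=0$, $E_p=E_{p-1}+e_p$, $E_t\ge 1$, which forces~$\sum_p e_p\ge 1$ (crucially \emph{not} the single dense inequality~$\sum_p e_p\ge 1$); \emph{(iii) a retrieval subsystem} --- for every triple~$(i,j,p)$ the inequality~$-y_{i,j}+e_p\le 1-c^{i,j}_p$, i.e.\ ``$e_p=1\Rightarrow y_{i,j}\ge c^{i,j}_p$''.

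The design point is that in~(iii) the instance data sits only in the right-hand side, so every row of~(iii) has exactly one~$+1$ and one~$-1$; hence the matrix of~(iii) is (the transpose of) the node--arc incidence matrix of a bipartite digraph on the~$y$- and~$e$-variables and is totally unimodular, and the chain in~(ii) is a node--arc incidence matrix plus unit rows. Correctness is straightforward: from the good index~$p$ set~$e_p=1$ (others~$0$), $y_{i,j}=c^{i,j}_p$, and let~$x$ encode the independent set; conversely, in any feasible solution some~$e_p\ge 1$, whence~$y$ encodes a supergraph of~$\bigcup_{p:e_p\ge1}G_p$, and an independent set of size~$k$ there is independent in each individual~$G_p$ --- so some~$G_p$ is a \yes-instance (this is exactly why letting several~$e_p$ be~$1$ does no harm). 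The parameter of the produced instance is~$p := 3\binom{n}{2}+n = \Oh(n^2)$: the base TU matrix is obtained from~$A$ by zeroing the entries of the~$\binom{n}{2}$ core constraints~$x_i+x_j+y_{i,j}\le 2$ and the constraint~$\sum_i x_i\ge k$ (zero rows do not affect total unimodularity), leaving the chain~(ii), the retrieval system~(iii), and the $0/1$-domain rows; changing these~$\Oh(n^2)$ zeros back to~$\pm 1$ recovers~$A$. Thus~$p$ is polynomial in~$\max_i|x_i|$, all entries are in~$\{-1,0,1\}\subseteq\{-p,\dots,p\}$, all variables are binary, and Theorem~\ref{theorem:crosscomp:nokernel} yields the claim.

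The main obstacle is proving total unimodularity of what remains after deleting the~$\Oh(n^2)$ core entries --- concretely, of the union of the selection chain~(ii) and the retrieval system~(iii), which share the~$e_p$-columns. Each of the two is an incidence matrix of a digraph (plus a few unit rows), but their union is not, because each~$e_p$-column acquires the single~$-1$ from the chain together with many~$+1$'s from retrieval; one must exhibit an explicit orientation/network-matrix realization of the combined system, or verify Ghouila--Houri's row-partition criterion directly. Routing the ``at least one~$e_p$ selected'' condition into the TU part without a dense constraint is the delicate step, and the prefix-sum chain is the natural candidate; should it resist a clean argument, two fallbacks keep the plan intact: force the condition through the core instead (e.g.\ add~$x_i\le E_t$ for all~$i$, so that~$E_t=0$ makes all~$x_i=0$), or replace~(ii) by a depth-$\Oh(\log t)$ selection tree whose~$\Oh(\log t)$ ``branch'' constraints are the only extra non-TU entries --- giving~$p=\Oh(n^2+\log t)$, still polynomial in~$\max_i|x_i|+\log t$ as required.
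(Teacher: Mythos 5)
Your route is genuinely different from the paper's (an OR-cross-composition from \IS rather than the paper's polynomial-parameter transformation from \HSN), but the step you yourself single out as the main obstacle is in fact fatal as stated: the union of the prefix-sum chain~(ii) and the retrieval system~(iii) is \emph{not} totally unimodular. Take the two retrieval rows $-y_{1,2}+e_1\le\cdot$ and $-y_{1,2}+e_2\le\cdot$ together with the two chain rows $E_1-e_1\le 0$ (from $E_1=E_0+e_1$ with $E_0=0$) and $E_2-E_1-e_2\le 0$, restricted to the columns $(y_{1,2},e_1,e_2,E_1)$:
\[
\det\begin{pmatrix}-1&1&0&0\\-1&0&1&0\\0&-1&0&1\\0&0&-1&-1\end{pmatrix}=2,
\]
so a square submatrix of determinant~$2$ survives after you zero out the core entries, and the claimed bound $p=3\binom{n}{2}+n$ does not hold. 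The culprit is exactly the column-sharing you identified: each $e_p$-column carries both the $\pm1$ of the chain and the $+1$'s of the retrieval rows. Neither fallback repairs this. Routing the ``at least one $e_p$'' condition through the core via $x_i\le E_t$ leaves the offending submatrix untouched, since it involves only chain and retrieval rows and no core row. The $\Oh(\log t)$-depth selection tree is not specified, and any decoder whose output columns are the same $e_p$-columns appearing in the $t\binom{n}{2}$ retrieval rows faces the identical problem; you would need to exhibit a concrete network-matrix realization or verify Ghouila--Houri for the combined system, and at present you have neither. (A smaller issue: for the ``all domains $\{0,1\}$'' claim you must also make the partial sums $E_p$ binary, which you only assert in passing.)

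For comparison, the paper sidesteps the composition entirely. It gives a polynomial-parameter transformation from \HSN (which by Dom et al.\ admits no compression polynomial in $n=|U|$ unless \containment), using variables $x_{u,F}$ and $x_u$ with constraints $\sum_{u\in F}x_{u,F}\ge 1$, $\sum_{F\in\F}x_{u,F}\le|\F|\cdot x_u$, and $\sum_{u}x_u\le k$. There the instance-specific data lives entirely in the sparsity pattern of a matrix having at most one $+1$ and one $-1$ per column (hence TU after adding the unit domain rows), and the only entries breaking total unimodularity are the $n$ coefficients $-|\F|$, giving $p=n$ immediately. If you wish to keep the cross-composition route, the missing ingredient is a selection-and-broadcast mechanism whose interaction with the retrieval rows is provably TU; as written, the construction does not have this property.
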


\begin{proof}
We reduce from the \HSN problem, in which we are given a set~$U$, a set~$\F\subseteq 2^U$ of subsets of~$U$, and an integer~$k\in\N$, and we have to decide whether there is a choice of at most~$k$ elements of~$U$ that intersects all sets in~$\F$; the parameter is~$n:=|U|$. Dom et al.~\cite{DomLS14} proved that \HSN admits no polynomial kernelization or compression (in terms of~$n$) unless \containment. We present a polynomial-parameter transformation from \HSN to \ILPF{}($p$) with domain size~$2$. The ILP produced by the reduction will be of the form~$Ax\leq b$ where~$A$ is totally unimodular plus~$p=n$ entries. Thus, any polynomial kernelization or compression in terms of~$p$ would give a polynomial compression for \HSN and, thus, imply \containment as claimed.

\emph{Construction.} Let an instance~$(U,\F,k)$ be given. We construct an ILP with~$0/1$-variables that is feasible if and only if~$(U,\F,k)$ is \yes for \HSN.
\begin{itemize}
 \item Our ILP has two types of variables:~$x_{u,F}$ for all~$u\in U,F\in\F$ and~$x_u$ for all~$u\in U$. For all variables we enforce domain~$\{0,1\}$ by~$x_{u,F}\geq 0$,~$x_{u,F}\leq 1$,~$x_u\geq 0$, and~$x_u\leq 1$ for~$u\in U$ and~$F\in\F$.
 \item The variables~$x_{u,F}$ are intended to encode what elements of~$u$ ``hit'' which sets~$F\in\F$. We enforce that each set~$F\in\F$ is ``hit'' by adding the following constraint.
 \begin{align}
  1\leq \sum_{u\in F}x_{u,F}&&\forall F\in\F  \label{constraint:checkhit}
 \end{align}
 \item The variables~$x_u$ are used to control which variables~$x_{u,F}$ may be assigned values greater than zero; effectively, they correspond to the choice of a hitting set from~$U$. Control of the~$x_{u,F}$ variables comes from the following constraints.
 \begin{align}
	\sum_{F\in\F}x_{u,F}\leq |\F|\cdot x_u && \forall u\in U \label{constraint:controlxufbyxu}
 \end{align}
 Additionally, we constrain the sum over all~$x_u$ to be at most~$k$, in line with the concept of having~$x_u$ select a hitting set of size at most~$k$.
 \begin{align}
  \sum_{u\in U} x_u\leq k \label{constraint:sumxu}
 \end{align}
\end{itemize}
Clearly, the construction can be performed in polynomial time. Let us now prove correctness.

\emph{Correctness.} Assume that~$(U,\F,k)$ is \yes for \HSN and let~$S\subseteq U$ be a hitting set for~$\F$ of size at most~$k$. Set all~$x_u$ to~$1$ if~$u\in S$ and to~$0$ otherwise; this fulfills \cref{constraint:sumxu}. For each~$F\in\F$ there is at least one~$u\in S\cap F$ as~$S$ is a hitting set for~$\F$ and we set the variable~$x_{u,F}$ to~$1$; all other~$x_{u,F}$ are set to~$0$. Thus, we satisfy all constraints \cref{constraint:checkhit}. Clearly, only~$x_{u,F}$ with~$u\in S$ (and hence~$x_u=1$) receive value~$1$ and, thus, this assignment fulfills also \cref{constraint:controlxufbyxu}. Finally, we note that all variables receive values from~$\{0,1\}$, as required. Thus, our assignment is a feasible solution for the ILP, as claimed.

Conversely, assume that the ILP is feasible and fix any feasible assignment to all variables. Define a set~$S\subseteq U$ by picking those~$u\in U$ with~$x_u= 1$. Clearly, by \cref{constraint:sumxu} and domain~$\{0,1\}$, there are at most~$k$ such variables and, hence,~$|S|\leq k$. We will prove that~$S$ is a hitting set for~$\F$, so fix any set~$F\in\F$. By constraint \cref{constraint:checkhit}, at least one variable $x_{u,F}$ with $u\in F$ must take value~$1$. The corresponding element~$u$ must be in~$S$ since constraint~$\cref{constraint:controlxufbyxu}$ plus non-negativity can only be fulfilled when~$x_u\geq 1$ (as the left-hand side sum takes value greater than~$0$). By choice of~$S$ this implies~$u\in S$. Overall we get that~$S$ contains some element~$u\in S\cap F$ for all sets~$F\in\F$, implying that~$S$ is indeed a hitting set for~$\F$.

In the remainder of the proof, we show that the constraints can be written as~$A\x\leq b$ where~$A$ is totally unimodular plus $n$ entries (here~$\x$ stands for the vector of all variables~$x_{u,F}$ and~$x_u$ over all~$u\in U$ and~$F\in\F$). First, we need to write our constraints in the form
$A\begin{pmatrix}x_{U,\F}\\x_U\end{pmatrix}\leq b$,
where, e.g.,~$x_U$ stands for the column vector of all variables~$x_u$ with~$u\in U$. For now, we translate constraints~\cref{constraint:checkhit}, \cref{constraint:controlxufbyxu}, and \cref{constraint:sumxu} into this form; domain-enforcing constraints will be discussed later. We obtain the following, where~$(\mathbf{1/0})$ and~$(\mathbf{-1/0})$ are shorthand for submatrices that are entirely~$0$ except for exactly one~$1$ or one~$-1$ per column, respectively.

{
\small
    \vspace{0.1cm}
   \newcommand{\spc}[1]{\makebox[1.2cm]{#1}}
   \begin{align*}
   A'\x=
   \begin{pmatrix}
   \begin{pmatrix} & &  \\ & \spc{$\mathbf{-1/0}$} & \\ & & \end{pmatrix}
   &\quad & 
   \begin{pmatrix}  & \cdots &  \\ \spc{$\vdots$} & \ddots & \vdots \\ \spc{$0$} & \spc{$\cdots$} & \spc{$0$}\end{pmatrix}
   \\
   \begin{pmatrix} & & \\ & \spc{$\mathbf{1/0}$} & \\ \\ \end{pmatrix} & & 
   \begin{pmatrix} \spc{$-|\F|$} & 0 & 0 \\ 0 & \spc{$-|\F|$} & 0 \\ 0 & 0 & \spc{$-|\F|$} \end{pmatrix} \\
   0\cdots0 && \begin{matrix} \spc{1} & & \spc{$\cdots$} & & \spc{1} \end{matrix}
   \end{pmatrix}
   \begin{pmatrix}x_{U,\F}\\x_U\end{pmatrix}
   \leq
   \begin{pmatrix}
   \begin{matrix} -1 \\ \vdots \\ -1 \end{matrix} \\
   \begin{matrix} 0 \\ \vdots \\ 0 \end{matrix} \\
   k
   \end{pmatrix}
   &&
   \begin{array}{l}
   \phantom{-1} \\ \cref{constraint:checkhit} \phantom{\vdots} \\ \phantom{-1} \\ \phantom{0} \\ \cref{constraint:controlxufbyxu} \phantom{\vdots} \\ \phantom{0} \\ \cref{constraint:sumxu} 
   \end{array}
   \end{align*}
}

In this expression, the columns of the matrix~$A'$ is are split into two groups. The first group contains the~$|\F| \times |U|$ columns for the variables~$x_{u,F}$ for~$u \in U$ and~$F \in \F$, while the second group contains the~$|U|$ columns for the variables~$x_u$ with~$u \in U$. Let us denote by~$\hat{A'}$ the matrix obtained from~$A'$ by replacing all~$n$ entries~$-|\F|$ by zero.

{\small
\begin{align*}
\hat{A'}=
\begin{pmatrix}
\mathbf{(-1/0)} &\quad & \mathbf{0}\\
\mathbf{(1/0)} && \mathbf{0}\\
0\cdots0 && 1\cdots1
\end{pmatrix}
\end{align*}

}

It is known that any matrix over~$\{-1,0,1\}$ in which every column has at most one entry~$1$ and at most one entry~$-1$, is totally unimodular (cf.~\cite[Theorem 13.9]{SchrijverBook}). Since~$\hat{A'}$ is of this form, it is clear that~$\hat{A'}$ is totally unimodular. To obtain the whole constraint matrix~$A$ we need to add rows corresponding to domain-enforcing constraints for all variables, and reset the~$-|\F|$ values that we replaced by zero. Clearly, putting back the latter breaks total unimodularity (and this is why~$A$ is only almost TU), but let us add everything else and see that the obtained matrix~$\hat{A}$ is totally unimodular. The domain-enforcing constraints affect only one variable each and, thus, each of them corresponds to a row in~$\hat{A}$ that contains only a single nonzero entry of value~$1$ or~$-1$. It is well known that adding such rows (or columns) preserves total unimodularity. (The determinant of any square submatrix containing such a row can be reduced to that of a smaller submatrix by expanding along a row that has only one nonzero of~$1$ or~$-1$.)

Finally, $\hat{A}$ and $A$ are only distinguished by the~$n$ entries of value~$-|\F|$ that are present in~$A$ but which are~$0$ in~$\hat{A}$. Since~$\hat{A}$ is totally unimodular, it follows that~$A$ is totally unimodular plus~$p=n$ entries, as claimed.
\end{proof}

Complementing Theorem~\ref{theorem:tulb}, 
we prove that TU subsystems of an ILP can be reduced to a size that is polynomial in the domain, with degree depending on the number of variables that occur also in the rest of the ILP. We again phrase this in terms of replacing boundaried ILPs and prove that any $r$-boundaried TU subsystem can be replaced by a small equivalent system of size polynomial in the domain $d$ with degree depending on $r$.

\begin{lemma}\label{lemma:tum:boundariedilp:replacement}
There is an algorithm with the following specifications: (1) It gets as input an~$r$-boundaried ILP~$(A,b,x_{t_1}, \ldots, x_{t_r})$ with domain size~$d$, with~$A\in\Z^{m\times n}$,~$b\in\Z^m$, and such that the restriction of~$A$ to columns~$[m]\setminus\{t_1,\ldots,t_r\}$ is totally unimodular. (2) Given such an input it takes time $\Oh(d^r \cdot g(n,m) + d^{2r})$ where~$g(n,m)$ is the runtime for an LP solver for determining feasibility of a linear program with~$n$ variables and~$m$ constraints. (3) Its output is an equivalent~$r$-boundaried ILP~$(A',b',x'_{t'_1},\ldots,x'_{t'_r})$ of domain size~$d$ containing~$\Oh(r \cdot d^{r})$ variables and constraints, and all entries of~$(A',b')$ in~$\{-d,\ldots,d\}$.
\end{lemma}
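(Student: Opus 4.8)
The plan is to reuse the two-phase structure from the proof of Lemma~\ref{lemma:tw:boundariedilp:replacement} essentially verbatim, changing only the subroutine that decides which boundary assignments are feasible. In phase one we compute the list~$\L$ of all assignments in~$\{0,\ldots,d-1\}^r$ to the boundary variables that \emph{cannot} be extended to a feasible solution of~$(A,b)$. In phase two we invoke the generic blocking construction of Kratsch~\cite[Theorem 5]{Kratsch13_ilp1} --- reproduced in the proof of Lemma~\ref{lemma:tw:boundariedilp:replacement} --- which, given~$\L$, outputs an~$r$-boundaried ILP with~$r+2r|\L|$ variables, $2r+(6r+1)|\L|$ constraints, all entries in~$\{-d,\ldots,d\}$, and feasible boundary assignments exactly~$\{0,\ldots,d-1\}^r\setminus\L$. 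That construction is oblivious to how~$\L$ was produced, so the size, coefficient, and equivalence claims of the lemma follow from it exactly as in the treewidth case: since~$|\L|\le d^r$ we get~$\Oh(r\cdot d^r)$ variables and constraints, and writing down the resulting system takes~$\Oh(d^{2r})$ time.

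Phase one is where total unimodularity enters. Fix a candidate assignment~$x_{t_1}=a_1,\ldots,x_{t_r}=a_r$ with each~$a_i\in\{0,\ldots,d-1\}$. Substitute these values into every constraint of~$Ax\le b$ and move the resulting constants to the right-hand side. This eliminates the boundary columns and leaves a system~$\tilde A\tilde x\le \tilde b$ on the non-boundary variables, in which~$\tilde A$ is precisely the restriction of~$A$ to the columns~$[n]\setminus\{t_1,\ldots,t_r\}$; by hypothesis this matrix is totally unimodular. Moreover~$\tilde b$ is integral, since~$A$, $b$ and the~$a_i$ are. Consequently the polyhedron~$\{\tilde x:\tilde A\tilde x\le\tilde b\}$ is integral (it is the convex hull of its integer points; cf.~\cite{SchrijverBook}), and because this system retains the domain-enforcing rows~$-x_i\le 0$ and~$x_i\le d-1$ for every surviving variable~$x_i$, the polyhedron is in fact a polytope whose points automatically respect the domains. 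Hence the residual ILP is feasible if and only if the residual LP is, and feasibility of the candidate can be decided by a single LP-feasibility call in time~$g(n,m)$. Iterating over all~$d^r$ candidates builds~$\L$ in~$\Oh(d^r\cdot g(n,m))$ time; adding the~$\Oh(d^{2r})$ time of phase two gives the stated running time.

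I do not expect a genuine obstacle here: the result is a re-instantiation of Lemma~\ref{lemma:tw:boundariedilp:replacement} with an LP-based total-unimodularity oracle in place of the bounded-treewidth dynamic program. The one place that needs care is the argument of phase one --- checking that substituting the boundary values really leaves a system whose constraint matrix is the given totally unimodular restriction (so that submatrix-closure of total unimodularity applies), that the new right-hand side stays integral, and that the surviving domain constraints keep the LP feasible region inside the correct box so that LP feasibility coincides with integer feasibility. The equivalence of the output to the input boundaried ILP, and the bounds on the number and magnitude of the coefficients, are identical to the treewidth case and require no new work.
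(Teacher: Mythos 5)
Your proposal is correct and follows essentially the same route as the paper: iterate over all $d^r$ boundary assignments, substitute to obtain a system whose constraint matrix is the given totally unimodular restriction with integral right-hand side, decide feasibility of each by one LP call, and then feed the resulting list $\L$ of infeasible assignments into the blocking construction from Lemma~\ref{lemma:tw:boundariedilp:replacement}. The extra justification you give for why LP feasibility coincides with integer feasibility (integrality of the polyhedron, retained domain constraints) is a slightly more detailed version of the paper's one-line remark and is welcome but not a departure.
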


\begin{proof}
Similar to the treewidth case (Lemma~\ref{lemma:tw:boundariedilp:replacement}) this is a two-step process where we first determine a set~$\L$ of infeasible partial assignments~$(a_1,\ldots,a_r)$, i.e., such that no feasible assignment for~$Ax\leq b$ has~$x_{t_1}=a_1,\ldots,x_{t_r}=a_r$, and then encoding those directly using new constraints and variables. The second part can be done just as for Lemma~\ref{lemma:tw:boundariedilp:replacement} but we use a different routine for finding infeasible assignments since we no longer have bounded treewidth of~$G(A)$.

Concretely, for every~$(a_1,\ldots,a_r)\in\{0, \ldots, d-1\}^r$ we set~$x_{t_1}=a_1,\ldots,x_{t_r}=a_r$ and simplify the system to~$A'x'\leq b'$ where~$x'$ is the vector of all remaining variables. Here~$A'$ is simply~$A$ restricted to columns~$[n]\setminus\{t_1,\ldots,t_r\}$ and
\begin{align*}
b'[i]=b[i]-\sum_{j\in\{t_1,\ldots,t_r\}} A[i,j]a_j, && \mbox{for~$i\in[m]$.}
\end{align*}
We recall that~$A'$ is totally unimodular and note that~$b'$ is integer, implying that we can test feasibility of~$A'x'\leq b'$ in polynomial-time using an LP-solver (e.g., the ellipsoid method).

Again, this gives us a list of all infeasible assignments to~$x_{t_1},\ldots,x_{t_r}$ and we proceed as in the proof of Lemma~\ref{lemma:tw:boundariedilp:replacement} to output an equivalent ILP with~$r$ boundary variables. This completes the proof.
\end{proof}

Lemma~\ref{lemma:tum:boundariedilp:replacement} implies that if $Y_0$ is a set of (at most) $p$ variables whose removal makes the remaining system TU, then the number of variables in the system can efficiently be reduced to a polynomial in~$d+p$ with degree depending on $r$, if each TU subsystem depends on at most $r$ variables in $Y_0$. To get this, it suffices to apply Lemma~\ref{lemma:tum:boundariedilp:replacement} once for each choice of at most $r$ boundary variables in $Y_0$. (Note that without assuming a bounded value of $r$ we only know $r\leq p$, so the worst-case bound obtained is not polynomial, but exponential, in $p+d$.)

%%%%%%%%%%%%%%%%%%%%%%%%%%%%%%%%%%%%%%%%%%%%%%%%%%%%%%%%%%%%%%%%%%%%%%%%%%%%%%%%%%%%%%%%%%%%%%%%%%%%%%%%%%%%%%%%%%%%%%%%%%%%%%%%%%%%%%%%%%%%%%%%%%%%%%%%%%%%%%%%%%%%%%%%%%%%%%%%%%%%%%

\section{Observations on TU subsystems with small boundary}\label{section:tudiscussion}

In this section, we discuss some of the differences between TU subsystems and those of bounded treewidth (both with small boundary); concretely, we argue that the former are somewhat easier to handle than the latter. To this end, we discuss how one can improve Lemma~\ref{lemma:tum:boundariedilp:replacement} in comparison to its bounded-treewidth counterpart (Lemma~\ref{lemma:tw:boundariedilp:replacement}). For intuition, consider the case of having a single boundary variable only, say $x_1$. Let $0\leq i<j\leq d$ such that there are integer feasible assignments for the subsystem with $x_1=i$ and $x_1=j$, respectively. By convexity, this gives fractionally feasible assignments for all $x_1=i+\delta(j-i)$, for $0\leq\delta\leq 1$. For every choice of $\delta$ such that $x_1$ is integer, we get a fractionally feasible assignment for the subsystem. Since the system with value of $x_1$ plugged in simplifies to a TU system on the remaining variables, it follows that there are also integer feasible assignments that are consistent with this value of $x_1$. Thus, the feasible values for $x_1$ are equal to $\{i,\ldots,j\}$ for some $0\leq i<j\leq d$, which takes only $\Oh(\log d)$ bits to encode (by two simple constraints). In the same way, we can rule out a lower bound as for the treewidth case (Theorem~\ref{theorem:lb:largedomainvar}).
There we used two boundary variables with domains $\{0,1\}$ and $\{1,\ldots,t\}$, and relied heavily on the fact that the feasible boundary assignments were not simply the integer points in a convex polytope; thus, we effectively encoded $t$ bits of information. For the TU case, fixing the first variable yields a consecutive feasible interval for the second variable; the two intervals can be encoded in $\Oh(\log t)$ bits.

In general, total unimodularity of the subsystem (minus the boundary) implies that we are only interested in the boundary assignments that have fractionally feasible completions in the subsystem. The argument of the previous paragraph generalizes to larger boundary sizes: if~$\mathbf{x}$ and~$\mathbf{x'}$ are feasible vectors of boundary assignments, then any convex combination $\mathbf{x} + \delta (\mathbf{x'} - \mathbf{x})$ for~$0 \leq \delta \leq 1$ that is an integer vector can be extended to a feasible assignment. Let~$\mathcal{B}$ denote the set that contains, for every feasible integer assignment, its projection to the boundary variables. Then~$\mathcal{B}$ is closed under taking convex combinations that result in integer vectors. It follows that an integer boundary assignment is feasible if and only if it lies in the convex hull of~$\mathcal{B}$. Thus, the (encoding) complexity of feasible assignments to the boundary is governed by known upper and lower bounds for the integer hull of polytopes. According to a survey of B\'ar\'any~\cite[Theorem 7.1]{Barany_survey} the maximum number of vertices for an integer polyhedron in~$r$ dimensions whose points have coordinates from~$\{0, \ldots, d-1\}$ is $\Theta(d^{r\frac{r-1}{r+1}})$, which converges to $\Theta(d^{r-2})$ (from above) for growing $r$. This implies that the feasible boundary assignments to an $r$-boundaried ILP of domain~$d$, whose non-boundary variables induce a TU system, can be described in~$\Theta(d^{r\frac{r-1}{r+1}} r \log d)$ bits by writing down the coordinates of all the vertices. The same polyhedral bound can be used to show that~$\Omega(d^{r\frac{r-1}{r+1}})$ bits are necessary: An integral polyhedron in~$[0\ldots d-1]^r$ with~$N$ vertices forms the solution space of an ILP with~$r$ variables. Any single vertex of the polyhedron can be cut off from the feasible region by an additional constraint: Take a halfspace that intersects the polyhedron only at that vertex and move it slightly toward the interior. The resulting integer polyhedron is again described by an ILP. As there is a different ILP for each of the~$2^N$ subsets of vertices that can be cut off, it follows that at least~$N$ bits are needed to describe such an ILP. Since there are integral polyhedra with~$r$ variables, domain $\{0,\ldots,d-1\}$ and~$N \in \Omega(d^{r\frac{r-1}{r+1}})$ vertices, it follows that this number lower bounds the number of bits needed to encode an $r$-boundaried ILP whose remaining variables form a TU system. (Note that this lower bound applies even if the TU part is empty.) It is interesting to note the contrast with $r$-boundaried ILP's of treewidth~$\Theta(r)$, for which~$\Theta(d^r)$ bits are necessary by Proposition~\ref{proposition:tw:protrusion:lowerbound} and trivially sufficient.

%%%%%%%%%%%%%%%%%%%%%%%%%%%%%%%%%%%%%%%%%%%%%%%%%%%%%%%%%%%%%%%%%%%%%%%%%%%%%%%%%%%%%%%%%%%%%%%%%%%%%%%%%%%%%%%%%%%%%%%%%%%%%%%%%%%%%%%%%%%%%%%%%%%%%%%%%%%%%%%%%%%%%%%%%%%%%%%%%%%%%%

\section{Discussion and future work}\label{section:discussion}

We have studied the effect that subsystems with bounded treewidth or total unimodularity have regarding kernelization of the \textsc{ILP Feasibility} problem. We show that if such subsystems have a constant-size boundary to the rest of the system, then they can be replaced by an equivalent subsystem of size polynomial in the domain size (with degree depending on the boundary size). Thus, if an \ILPF instance can be decomposed by specifying a set of $p$ shared variables whose deletion (or replacing with concrete values) creates subsystems that are all TU or bounded treewidth and have bounded dependence on the $p$ variables, then this can be replaced by an equivalent system whose number of variables is polynomial in $p$ and the domain size $d$. We point out that for the case of binary variables (at least in the boundary) the replacement structures get much simpler, using no additional variables and with only a single constraint per forbidden assignment. Using a similar approach and binary encoding for boundary variables should reduce the number of additional variables to $\Oh(\log d)$ per boundary variable.

Complementing this, we established several lower bounds regarding limitations of replacing such subsystems. Inherently, the replacement rules rely on having subsystems with small boundary size for giving polynomial bounds. We showed that this is indeed necessary by giving lower bounds for fairly restricted settings where we do not have the guarantee of constant boundary size, independent of the means of data reduction. In the case of treewidth we could also show that boundaries with only one large-domain variable can be a provable obstacle. For the case of totally unimodular subsystems the discussion in the previous section shows that these behave in a slightly simpler way than bounded-treewidth subsystems: By ad hoc arguments we can save a factor of $d$ in the encoding size by essentially dropping the contribution of any one boundary variable; thus we rule out a lower bound proof for the case of one boundary variable having large domain. Asymptotically, we can save a factor of almost $d^2$, which is tight. It would be interesting whether having two or more large domain variables (in a boundary of constant size) would again allow a lower bound against kernelization.

A natural extension of our work is to consider the optimization setting where we have to minimize or maximize a linear function over the variables, and may or may not already know that the system is feasible. In part, our techniques are already consistent with this since the reduction routine based on treewidth dynamic programming or optimization over a TU subsystem can be easily augmented to also optimize a target function over the variables. A technical caveat, however, is the following: If we simplify a protrusion, then along with each feasible assignment to the boundary, we have to store the best target function contribution that could be obtained with the variables that are removed; this value can, theoretically, be unbounded in all other parameters. If a binary encoding of such values is sufficiently small (or if the needed space is allowed through an additional parameter), then our results also carry over to optimization. Apart from that, a rigorous analysis of both weight reduction techniques and possible lower bounds is left as future work.

\bibliographystyle{abbrvurl}
\bibliography{ilpkernel_arxiv}

\end{document}